\documentclass[letterpaper,10pt,conference]{ieeeconf}
\IEEEoverridecommandlockouts
\usepackage{amsmath,amssymb,amsfonts}
\usepackage{mathtools}
\usepackage{graphicx}
\usepackage{xcolor}
\usepackage{url}
\usepackage{cite}
\usepackage{booktabs}
\usepackage{array}
\usepackage{balance}
\usepackage{algorithm}
\usepackage{hyperref}
\usepackage{algpseudocode}
\usepackage{subcaption}
\usepackage{todonotes}
\usepackage{tikz}
\usetikzlibrary{arrows.meta,positioning,calc}

\newcommand{\BibTeX}{\rm B\kern-.05em{\sc i\kern-.025em b}\kern-.08em\TeX}
\newtheorem{theorem}{Theorem}
\newtheorem{lemma}[theorem]{Lemma}
\newtheorem{proposition}[theorem]{Proposition}

\newtheorem{definition}[theorem]{Definition}
\newtheorem{remark}[theorem]{Remark}

\title{Distributed Algorithms for $\alpha$-Potential Functions in General-Sum Games}
\author{Yifei Chen$^{1}$ and Chinmay Maheshwari$^{2}$
\thanks{$^{1}$Yifei Chen is with the Department of Applied Mathematics and Statistics,
        Johns Hopkins University, Baltimore, Maryland, USA
        {\tt\small (email: ychen780@jhu.edu)}}%
\thanks{$^{2}$Chinmay Maheshwari is with the Department of Electrical and Computer Engineering, Johns Hopkins University, Baltimore, Maryland, USA
        {\tt\small (email: chinmay\_maheshwari@jhu.edu)}}%
}

\begin{document}
\maketitle

\begin{abstract}
We study the problem of computing the tightest \(\alpha\)-potential
approximation of a general-sum game over continuous action spaces,
within a prescribed class of potential functions and when each player
has access only to its own utility function. The difficulty is twofold:
the approximation error involves a worst-case search over an infinite
set of unilateral deviations, and the required utility information is
distributed across players.

For a linear-in-parameters potential class, we use an exact finite-tuple
reformulation that separates the problem into a global outer search
over deviation tuples and distributed convex inner problems. We develop
a primal--dual inner oracle tailored to this structure and establish a
uniform one-sided accuracy guarantee. This oracle can be combined with
global outer search to obtain an end-to-end guarantee on the outer
optimization error. We also develop a projected zeroth-order outer
method as a computationally lighter alternative for higher-dimensional
problems. Numerical experiments illustrate the accuracy--computation
tradeoff between the two outer-search methods and show that the proposed
optimization framework can improve upon analytical \(\alpha\)-potential
constructions.
\end{abstract}
\section{Introduction}
Strategic interactions with heterogeneous and coupled objectives arise
in networked control, resource allocation, and multi-agent learning
\cite{marden2009cooperative,li2022confluence}. Potential games simplify
the analysis of such systems by representing players' unilateral
utility changes through a common scalar function
\cite{MONDERER1996124}. Exact potential functions, however, need not
exist in general-sum games. Near-potential games instead quantify the
mismatch through an approximation parameter $\alpha$; smaller $\alpha$
yields tighter equilibrium and learning guarantees for the original
game~\cite{candogan2011near,guo2025markov,rokade2025asymmetric}.
This motivates computing the smallest approximation error achievable
within a prescribed class of potential functions.

For finite-action games, unilateral deviations can be enumerated and
the nearest potential can be computed by linear programming
\cite{candogan2013dynamics}. The continuous-action setting is
substantially more difficult: the deviation space is infinite, and
evaluating the uniform full-domain approximation error requires a
generally nonconcave global maximization over deviations, even when
the objective is convex in the potential parameters. Moreover, in
many multi-agent systems utility information is naturally distributed:
each player can evaluate its own utility but need not know the utility
functions of other players. A centralized computation would therefore
require collecting player-specific utility information. We instead
seek a distributed procedure that keeps these utility evaluations
local and coordinates the computation through communication among
players.

Existing continuous-action approaches typically exploit additional
game structure rather than directly optimizing the full-domain error.
For example, explicit or analytical $\alpha$-potential constructions
have been developed for asymmetric network games
\cite{rokade2025asymmetric}, dynamic congestion games
\cite{guo2025markov}, and dynamic $N$-player games
\cite{guo2024alpha}, among others. Such constructions provide useful
analytical bounds but need not yield the tightest approximation for a
particular game instance. NePPO~\cite{kalanther2026neppo} instead
optimizes an equilibrium-local proxy rather than the uniform
full-domain criterion. Motivated by this gap, we study the distributed
computation of the tightest full-domain approximation over continuous
actions when each player has access only to its own utility function.

We consider a potential class that is linear in its parameters, while
allowing nonlinear dependence on the players' actions through a
prescribed feature map. Using the finite-tuple reformulation of
\cite{EXOTIC}, we express the original problem as a global outer
maximization over finite collections of deviations, where evaluating
each candidate tuple requires solving a convex optimization problem
over the potential parameters. This yields a modular two-level
architecture: the inner problem can be solved using distributed
optimization with local utility evaluations, while different search
procedures can be used for the outer problem.

For the inner problem, general distributed constrained-optimization
methods could in principle be applied
\cite{SrivastavaNedicStipanovic,aybat2016primal,xu2026accelerated}.
We instead develop an oracle tailored to the finite-maximum structure
of our problem. Each player maintains a single scalar constraint
multiplier, evaluates only its local utility, and communicates its
primal iterate with neighboring players. The resulting network-averaged
output provides a one-sided estimate of the finite-tuple value with an
error bound that is uniform over all queried tuples. This uniform
oracle property allows us to combine the distributed inner solver with global tree search method
EXOTIC \cite{EXOTIC} and obtain an end-to-end guarantee for the outer objective.
We also consider projected zeroth-order ascent as a computationally
lighter alternative when hierarchical global search becomes expensive.
Our main contributions are as follows.

\noindent\textbf{(1)}
We formulate the distributed computation of the tightest
near-potential approximation, within a prescribed
linear-in-parameters potential class, for continuous-action
general-sum games. Applying the finite-tuple reformulation of
\cite{EXOTIC}, we obtain an equivalent outer maximization over
deviation tuples with convex inner problems. We also give sufficient
conditions under which an optimizer of the finite-tuple problem
recovers an optimizer of the original full-domain problem (Proposition \ref{prop:reformulation-recovery}).

\noindent\textbf{(2)}
We develop a distributed primal--dual oracle tailored to these inner
problems. By retaining their finite-maximum structure, each player
uses one scalar constraint multiplier, accesses only its local utility,
and exchanges only its primal iterate with neighboring players. For
an inner iteration budget \(K\), we establish a uniform one-sided
oracle error of order
\(\widetilde{\mathcal O}(K^{-1/4})\) over all deviation tuples (Theorem \ref{thm:drpdm-convergence}).

\noindent\textbf{(3)}
We combine the distributed oracle with EXOTIC and derive an end-to-end convergence guarantee for the optimal near-potential value (Theorem \ref{thm:exotic-global}). Together with the structural recovery conditions of Proposition \ref{prop:reformulation-recovery}, this further yields asymptotic recovery of an optimal potential parameter.
We additionally develop a projected zeroth-order outer method as a
scalable heuristic that avoids hierarchical tree search. Numerical
experiments compare the resulting accuracy--computation tradeoff and
show that the optimization-based approach can improve upon analytical
near-potential constructions.

\paragraph{Notation}
For a scalar $\lambda$, let
$[\lambda]_+:=\max\{\lambda,0\}$. Given a nonempty closed convex set
$\Omega$, let $P_\Omega(x)$ denote the Euclidean projection of $x$
onto $\Omega$. For a convex function $f$, $\partial f(x)$ denotes
its sub-differential at $x$, while $\nabla_x f$ denotes its gradient
with respect to $x$ whenever it exists.
The communication network is represented by a connected undirected graph
$\mathcal G=(\mathcal V_\mathcal G,\mathcal E_\mathcal G)$, where $\mathcal V_\mathcal G$ is identical to the player set $\mathcal I$ and $\mathcal E_\mathcal G\subseteq \mathcal V_\mathcal G\times \mathcal V_\mathcal G$ is the edge set. The neighbor set of player $i$ is
$\mathcal N_i:=\{j\in\mathcal I:(i,j)\in \mathcal E_\mathcal G\}.$
Let $A_{\mathcal G}$ be the adjacency matrix, with
$[A_{\mathcal G}]_{ij}=1$ if $(i,j)\in\mathcal E_{\mathcal G}$
and zero otherwise. Let
$D_{\mathcal G}:=\operatorname{diag}(d_1,\ldots,d_N)$, where
$d_i:=\sum_j[A_{\mathcal G}]_{ij}$, and define the graph Laplacian
as $L_{\mathcal G}:=D_{\mathcal G}-A_{\mathcal G}$.
  
\section{Problem Formulation}
\label{sec:problem-formulation}
Consider a strategic-form game
$\Gamma=(\mathcal I,\{\mathcal A_i\}_{i\in\mathcal I},
\{u_i\}_{i\in\mathcal I})$
with player set $\mathcal I:=\{1,\ldots,N\}$,
nonempty compact action sets $\mathcal A_i\subseteq\mathbb R^{d_i}$,
and Lipschitz continuous utility functions
$u_i:\mathcal A\to\mathbb R$, where
$\mathcal A:=\prod_{i\in\mathcal I}\mathcal A_i$.
For each $i\in\mathcal I$, write $a=(a_i,a_{-i})$,
where $a_{-i}$ denotes the actions of all other players.
Let $\mathcal Y:=\mathcal A^2$.
For $y=(a,a')\in\mathcal Y$ and $q:\mathcal A\to\mathbb R$,
define the unilateral difference
$\Delta_q(y;i):=q(a_i',a_{-i})-q(a_i,a_{-i})$.

\begin{definition}[Exact potential function \cite{MONDERER1996124}]
A function $\phi:\mathcal A\to\mathbb R$ is an exact potential
function for $\Gamma$ if
$\Delta_\phi(y;i)=\Delta_{u_i}(y;i)$
for all $i\in\mathcal I$ and $y\in\mathcal Y$.
A game admitting such a function is an exact potential game.
\end{definition}
General-sum games need not admit an exact potential function.
We therefore consider approximate potentials that uniformly
bound the discrepancy between unilateral potential and utility
differences.
\begin{definition}[$\alpha$-potential function]
For $\alpha\geq 0$, a function $\phi:\mathcal A\to\mathbb R$
is an $\alpha$-potential function for $\Gamma$ if
$|\Delta_\phi(y;i)-\Delta_{u_i}(y;i)|\leq\alpha$
for all $i\in\mathcal I$ and $y\in\mathcal Y$.
\end{definition}

The case $\alpha=0$ recovers an exact potential function.
Our objective is to find the smallest $\alpha$ achievable
within a prescribed class of candidate potentials. We consider
\begin{equation}\label{eq:pot_param}
\phi_\theta(a):=\langle\theta,\psi(a)\rangle,
\qquad \theta\in\Theta\subseteq\mathbb R^d,
\end{equation}
where $\psi:\mathcal A\to\mathbb R^d$ is a prescribed Lipschitz
continuous feature map and $\Theta$ is nonempty, compact,
and convex\footnote{Although we use linear parametrization of potential function in \eqref{eq:pot_param}, the results in this paper extend to convex parameterization setting.}.

For each $i\in\mathcal I$ and $y\in\mathcal Y$, define
the signed approximation error \(
\ell_i(\theta,y)
:=\Delta_{\phi_\theta}(y;i)-\Delta_{u_i}(y;i),\)
and let
\begin{equation}
\label{eq:local-global-objectives}
f(\theta,y):=\max_{i\in\mathcal I}\ell_i(\theta,y),
\qquad
\mathcal E(\theta):=\max_{y\in\mathcal Y}f(\theta,y).
\end{equation}
Since $\mathcal Y$ includes both directions of every unilateral
deviation, $\mathcal E(\theta)
=\max_{i\in\mathcal I,\,y\in\mathcal Y}|\ell_i(\theta,y)|$.
Thus, $\mathcal E(\theta)$ is the smallest $\alpha$ for which
$\phi_\theta$ is an $\alpha$-potential function.
The resulting optimization problem is
\begin{equation}
\label{eq:general-potential-learning}
p^\star:=\min_{\theta\in\Theta}\mathcal E(\theta)
=\min_{\theta\in\Theta}\max_{y\in\mathcal Y}f(\theta,y),
\end{equation}
with an optimal parameter denoted by
$\theta^\star\in\arg\min_{\theta\in\Theta}\mathcal E(\theta)$. 

Computing the optimal approximation in
\eqref{eq:general-potential-learning} presents two main
challenges. First, although the objective is convex in
the potential parameters, its evaluation requires solving
a generally nonconcave maximization over the continuous
space of unilateral deviations. Second, evaluating the
objective requires information about all players'
utilities, whereas each player has access only to
its own utility function.

We address these challenges in the next two sections.
Section \ref{sec:finite-tuple} derives an equivalent finite-tuple reformulation
that separates the global nonconvex search from convex
optimization over the potential parameters.
Section \ref{sec:distributed-optimization} develops a distributed algorithm for solving
these convex subproblems using local utility information
and combines it with outer optimization methods.

\begin{remark}[Choice of parameterization]
Linearity in \(\theta\) makes each \(\ell_i(\cdot,y)\) affine
and hence \(f(\cdot,y)\) and \(\mathcal E\) convex, while
expressiveness is determined by \(\psi\).
For a finite action space
\(\mathcal A=\{a^1,\ldots,a^M\}\), choosing \(d=M\) and
\(\psi(a^m)=e_m\) recovers any tabular potential permitted by
\(\Theta\). On compact continuous action spaces, expanding a
polynomial basis yields uniform approximation of continuous
potentials by the Stone--Weierstrass
theorem~\cite{stone1948generalized}. The features may instead be
fixed neural-network representations or discounted state-action
occupancy measures, as in~\cite[Section~5]{guo2024alpha}.
Thus, linearity in the parameters still permits nonlinear dependence
on actions or policies. Nevertheless, \(f\) remains generally
nonconcave in \(y\), making~\eqref{eq:general-potential-learning}
convex--nonconcave minimax problem.
\end{remark}

\section{Finite-Tuple Reformulation}
\label{sec:finite-tuple}
We begin by reformulating the convex-nonconcave minimax optimization problem
\eqref{eq:general-potential-learning} to nonconcave convex maximin optimization problem using results in~\cite{EXOTIC}. The key idea is to replace
the maximization over the continuous deviation space
with an outer search over finite collections of deviations.
For each such collection, the corresponding optimization
over the potential parameters is convex.

Specifically, let $\mathcal W:=\mathcal Y^{d+1}$ denote
the set of tuples containing $d+1$ deviations.
For each $w=(y_1,\ldots,y_{d+1})\in\mathcal W$, define
\begin{equation}
\label{eq:finite-tuple-functions}
F(\theta,w)
:=
\max_{1\leq j\leq d+1} f(\theta,y_j),
\qquad
G(w)
:=
\min_{\theta\in\Theta}F(\theta,w).
\end{equation}

For a fixed tuple $w$, the function $F(\cdot,w)$
measures the maximum approximation error over the
deviations contained in $w$. Since $F(\cdot,w)$
is convex, computing $G(w)$ amounts to solving
a convex optimization problem over the potential
parameters.
Moreover, because
$F(\theta,w)\leq\mathcal E(\theta)$ for every
$\theta\in\Theta$ and $w\in\mathcal W$, we have
$G(w)\leq p^\star$.
Thus, every tuple provides a lower bound on the
optimal approximation error.

The following result establishes that maximizing
these lower bounds over all tuples recovers the
optimal value of the original problem. It also
provides sufficient conditions under which an
optimal parameter of the finite-tuple problem
solves the original problem.

\begin{proposition}[Reformulation and optimizer recovery]
\label{prop:reformulation-recovery}

Under the standing assumptions,
\begin{equation}
\label{eq:outer-optimal-value}
p^\star
=
\max_{w\in\mathcal W}G(w).
\end{equation}

Moreover, let \(
w^\star
=
(y_1^\star,\ldots,y_{d+1}^\star)
\in
\operatorname*{arg\,max}_{w\in\mathcal W}G(w),\)
and \(
\theta_{w^\star}^\star
\in
\operatorname*{arg\,min}_{\theta\in\Theta}
F(\theta,w^\star).\)
Define the set of active player-deviation pairs
\[
S^\star
:=
\left\{
(i,j)\in\mathcal I\times[d+1]:
\ell_i(\theta_{w^\star}^\star,y_j^\star)
=
F(\theta_{w^\star}^\star,w^\star)
\right\}.
\]

Then $\theta_{w^\star}^\star$ is an optimal
solution of~\eqref{eq:general-potential-learning}
if either of the following conditions holds:

\begin{itemize}
    \item[(i)]
    The finite-tuple objective is exact at
    $\theta_{w^\star}^\star$, i.e., \(
    F(\theta_{w^\star}^\star,w^\star)
    =
    \mathcal E(\theta_{w^\star}^\star).\)

    \item[(ii)]
    The active feature-difference vectors
    positively span $\mathbb R^d$, i.e., \(
    0\in
    \operatorname{int}\operatorname{conv}
    \left\{
    \Delta_\psi(y_j^\star;i):
    (i,j)\in S^\star
    \right\}.\)
\end{itemize}
\end{proposition}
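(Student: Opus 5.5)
The plan has three parts: the value identity, then recovery under (i), then recovery under (ii).

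\textbf{Value identity.} The upper bound is already in hand: $G(w)\le p^\star$ for every $w$. For the reverse inequality I would use a Helly-type argument for convex semi-infinite programs. This is the content of the finite-tuple result of \cite{EXOTIC}, which I would cite; the underlying argument runs as follows.

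Write the epigraph form $\min\{t:(\theta,t)\in\Theta\times\mathbb R,\ \ell_i(\theta,y)\le t\ \forall i,y\}$. The variables are $(\theta,t)\in\mathbb R^{d+1}$ and every constraint is affine. Suppose, for contradiction, that $G(w)<p^\star-\varepsilon$ for all $w$. Consider the family of convex sets
\[
C_y:=\{\theta\in\Theta: f(\theta,y)\le p^\star-\varepsilon\},\qquad y\in\mathcal Y .
\]
These sets live in $\mathbb R^d$ and are compact, since $\Theta$ is compact and $f(\cdot,y)$ is continuous. Any $d+1$ of them have a common point, because $G(w)<p^\star-\varepsilon$ supplies $\theta$ with $F(\theta,w)\le p^\star-\varepsilon$. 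Helly's theorem together with compactness (the finite intersection property) then gives a point of $\bigcap_y C_y$. That point satisfies $\mathcal E(\theta)\le p^\star-\varepsilon$, which is a contradiction.

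This yields $\sup_w G=p^\star$. To show the maximum is attained, I would check that $G$ is upper semicontinuous on the compact set $\mathcal W$. Indeed $F$ is jointly continuous, by Lipschitzness of $u_i$ and $\psi$ and compactness, so $G$ is a minimum over a fixed compact set of continuous functions and is in fact continuous.

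\textbf{Case (i).} This is immediate from the chain
\[
\mathcal E(\theta^\star_{w^\star})=F(\theta^\star_{w^\star},w^\star)=G(w^\star)=p^\star\le \mathcal E(\theta) \quad\text{for all }\theta\in\Theta .
\]

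\textbf{Case (ii).} The strategy is to prove that $F(\cdot,w^\star)$ has a strict (sharp) local minimum at $\theta^\star:=\theta^\star_{w^\star}$, and then to exploit optimality of $w^\star$.

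First, sharpness. The function $F(\cdot,w^\star)$ is a pointwise maximum of the affine functions $\ell_i(\cdot,y_j^\star)$, whose gradients are $\Delta_\psi(y_j^\star;i)$. Near $\theta^\star$ only the active pairs in $S^\star$ matter, so for small $h$
\[
F(\theta^\star+h,w^\star)-F(\theta^\star,w^\star)=\max_{(i,j)\in S^\star}\langle \Delta_\psi(y_j^\star;i),h\rangle .
\]
Because $0$ lies in the interior of the convex hull of these vectors, the support function $\max_{(i,j)\in S^\star}\langle \Delta_\psi(y_j^\star;i),h\rangle$ is at least $c\|h\|$ for some $c>0$. Since $F(\cdot,w^\star)$ is convex, this local growth extends globally, so $\theta^\star$ is the unique minimizer and $F(\theta,w^\star)\ge G(w^\star)+c\|\theta-\theta^\star\|$ on $\Theta$.

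Second, the contradiction argument. Suppose $\mathcal E(\theta^\star)>p^\star$. Then some pair $(i,y)$ has $\ell_i(\theta^\star,y)>p^\star$. Let $\theta^\star$ also denote an optimal solution of the original problem, and let $\bar\theta$ be such an optimal parameter. We have $F(\bar\theta,w^\star)\le\mathcal E(\bar\theta)=p^\star=G(w^\star)$, so $\bar\theta$ minimizes $F(\cdot,w^\star)$ as well. Uniqueness of the minimizer gives $\bar\theta=\theta^\star$, hence $\mathcal E(\theta^\star)=p^\star$. This yields the claim directly, and is cleaner than a swap argument that replaces a tuple entry by $y$.

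\textbf{Main obstacle.} I expect the main difficulty to be the sharpness step when $\theta^\star$ lies on the boundary of $\Theta$. There the interior-of-convex-hull condition still gives strict growth in every direction, so uniqueness holds, but it needs a careful statement. The other delicate point is that the first part only concludes attainment of the maximum in $p^\star=\max_w G(w)$ once continuity of $G$ has been verified. For that I would lean on the EXOTIC reformulation rather than reproving it.
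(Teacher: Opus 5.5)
Your proposal is correct and follows the paper's proof. Case~(i) is the same chain of equalities. Case~(ii) is the same argument: $\theta^\star_{w^\star}$ is the unique minimizer of $F(\cdot,w^\star)$, and every minimizer $\bar\theta$ of $\mathcal E$ satisfies $F(\bar\theta,w^\star)\le\mathcal E(\bar\theta)=p^\star=G(w^\star)$, hence coincides with it. For the value identity the paper only cites \cite[Proposition~2.1]{EXOTIC}; your Helly/finite-intersection argument on the compact convex level sets $C_y\subseteq\mathbb R^d$ is a valid self-contained proof of that cited step.

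Two cosmetic points. First, the boundary worry is moot. By affinity, $F(\theta,w^\star)\ge F(\theta^\star_{w^\star},w^\star)+\max_{(i,j)\in S^\star}\langle\Delta_\psi(y_j^\star;i),\theta-\theta^\star_{w^\star}\rangle$ holds for every $\theta\in\mathbb R^d$, so no local-to-global step is needed; this is how the paper argues. Second, in case~(ii) you should delete the unused hypothesis $\mathcal E(\theta^\star)>p^\star$. You should also delete the sentence reusing $\theta^\star$ for an optimal solution of the original problem, since read literally it assumes the conclusion.
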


\begin{proof}[Proof Sketch]
The optimal-value identity
\eqref{eq:outer-optimal-value}
follows from~\cite[Proposition~2.1]{EXOTIC}.
The required continuity and convexity
conditions hold, while compactness of $\Theta$
ensures boundedness and attainment.

For optimizer recovery, we use \cite[Proposition 2.4]{EXOTIC}. Condition~(i)
directly implies that
$\mathcal E(\theta_{w^\star}^\star)=p^\star$ (\cite[Proposition 2.4-(ii)]{EXOTIC}).
Under condition~(ii), the active
feature-difference vectors ensure that
$\theta_{w^\star}^\star$ is the unique minimizer
of $F(\cdot,w^\star)$. Since every minimizer
of $\mathcal E$ also minimizes
$F(\cdot,w^\star)$, it follows that
$\theta_{w^\star}^\star$ is also the
unique minimizer of $\mathcal E$(\cite[Proposition 2.4-(i)]{EXOTIC}).
The complete proof is provided in
Appendix~\ref{app:reformulation-recovery-proof}.
\end{proof}
Proposition~\ref{prop:reformulation-recovery} separates value recovery from parameter recovery. Equality \eqref{eq:outer-optimal-value} implies that solving the outer problem recovers the optimal approximation value \(p^\star\)  without additional assumptions. 
For an optimal tuple $w^\star$, an inner minimizer solves the
full-domain problem if it satisfies either condition~(i) or
condition~(ii) of Proposition~\ref{prop:reformulation-recovery}. Section~\ref{sec:distributed-optimization} establishes
convergence of the outer objective values to $p^\star$.
This value convergence alone does not guarantee recovery of
a full-domain optimizer; the optimizer-recovery statement
remains conditional on the conditions in Proposition~\ref{prop:reformulation-recovery}.

\section{Distributed Oracle and Outer Optimization}
\label{sec:distributed-optimization}
Proposition~\ref{prop:reformulation-recovery} motivates a
modular two-level computational framework, illustrated
in Figure~\ref{fig:outer-inner}. The framework separates
the global maximization over deviation tuples from the
convex optimization required to evaluate $G(w)$.

In this paper, we develop a distributed inner oracle
tailored to the finite-tuple problem structure.
At each outer query, a centralized coordinator
broadcasts a candidate tuple $w$ and an iteration
budget $K$. Players then use local utility information
and communication with their neighbors to compute
an approximate evaluation $\widehat G_K(w)$.
We establish a uniform finite-time accuracy guarantee
for the proposed oracle.

For the outer optimization, we consider two methods
that offer different computational trade-offs.
First, we integrate our distributed oracle with
EXOTIC~\cite{EXOTIC}, which provides a global
optimization guarantee but can be computationally
expensive in high-dimensional problems.
Second, we develop a projected zeroth-order ascent
method as a computationally scalable alternative,
without a global optimality guarantee.

We first present the distributed inner oracle and
its convergence analysis, followed by the two
outer optimization methods.
\begin{figure}[t]
\centering
\begin{tikzpicture}[
  font=\footnotesize,
  >=Latex,
  coordinator/.style={
    draw,
    rounded corners=2pt,
    thick,
    fill=black!5,
    minimum width=.72\columnwidth,
    minimum height=9mm,
    align=center
  },
  player/.style={
    circle,
    draw,
    thick,
    fill=black!3,
    minimum size=7.5mm,
    inner sep=0pt
  },
  oracle/.style={
  draw,
  rounded corners=2pt,
  thick,
  dashed,
  minimum width=.82\columnwidth,
  minimum height=29mm
  },
  flow/.style={->,thick}
]

\node[coordinator] (outer)
  {\textbf{Centralized outer coordinator}};

\node[oracle, below=13mm of outer] (inner) {};

\node[
  font=\footnotesize\bfseries,
  anchor=north,
  yshift=-1.5mm
] at (inner.north)
  {Distributed inner oracle (Algorithm~1)};

\node[player, at={($(inner.center)+(-22mm,-1mm)$)}] (p1) {$1$};
\node[player, at={($(inner.center)+(-10mm, 4mm)$)}] (p2) {$2$};
\node[player, at={($(inner.center)+(  4mm, 4mm)$)}] (p3) {$3$};
\node[player, at={($(inner.center)+( -6mm,-7mm)$)}] (pi) {$i$};
\node[player, at={($(inner.center)+( 10mm,-6mm)$)}] (pj) {$j$};
\node[player, at={($(inner.center)+( 23mm, 0mm)$)}] (pN) {$N$};

\draw[thick]
  (p1)--(p2)
  (p1)--(pi)
  (p2)--(p3)
  (p2)--(pi)
  (p3)--(pi)
  (p3)--(pj)
  (p3)--(pN)
  (pi)--(pj)
  (pj)--(pN);
  
\draw[flow]
  ([xshift=-12mm]outer.south) --
  node[left,align=right]{tuple $w$\\budget $K$}
  ([xshift=-12mm]inner.north);

\draw[flow]
  ([xshift=12mm]inner.north) --
  node[right,align=left]{oracle estimate\\$\widehat G_K(w)$}
  ([xshift=12mm]outer.south);

\end{tikzpicture}
\caption{Centralized outer search coupled with the distributed inner oracle.}
\label{fig:outer-inner}
\end{figure}
For concision, we give only the main proof arguments in this section;
complete proofs are provided in Appendix~\ref{app:section-iv-proofs}.

\subsection{Inner Distributed Optimization Oracle}
\label{sec:distributed-oracle}

Fix a candidate tuple $w=(y_1,\ldots,y_{d+1})\in\mathcal W$.
Using only its local utility information, each player
\(i\in\mathcal I\) defines the finite-tuple objective
\begin{equation}
\label{eq:local-tuple-objective}
h_i^w(\theta):=\max_{1\leq j\leq d+1}\ell_i(\theta,y_j).
\end{equation}
Then $F(\theta,w)=\max_{i\in \mathcal I} h_i^w(\theta)$ and
$G(w)=\min_{\theta\in\Theta}\max_{i\in \mathcal I} h_i^w(\theta)$.
Each $h_i^w$ is convex as a finite maximum of affine
functions. Player $i$ can compute its values and subgradients
using its own utility $u_i$, the feature map $\psi$,
and the tuple $w$.

Introducing an epigraph variable replaces the maximum
across players with one convex inequality per player. 
Choose common finite bounds $t_{\min}$ and $t_{\max}$ such that
\(
t_{\min}\leq h_i^w(\theta)\leq t_{\max}
\)
for all $i\in\mathcal I$, $\theta\in\Theta$, and $w\in\mathcal W$,
and let $\mathcal T:=[t_{\min},t_{\max}]$. Such bounds exist because
$h_i^w(\theta)$ is uniformly bounded by continuity and compactness;
the shared bounds used in implementation need not be tight.
We obtain
\begin{equation}
\label{eq:tuple-inner-epigraph}
\begin{aligned}
G(w)=\min_{\theta\in\Theta,\;t\in\mathcal T}\quad &t\\
\mathrm{s.t.}\quad &h_i^w(\theta)\leq t,
\qquad \forall i\in\mathcal I.
\end{aligned}
\end{equation}
Given $(\theta,t)$, player $i$ can evaluate its own constraint
$h_i^w(\theta)\leq t$.
The optimization requires finding a common pair satisfying
every player's constraint.
Suppose the players communicate over a connected undirected
graph $\mathcal G=(\mathcal I,\mathcal E_{\mathcal G})$.
Assign each player a local copy
$z_i:=(\theta_i^\top,t_i)^\top\in\mathcal Z:=\Theta\times\mathcal T.$
Enforcing neighbor consensus yields
\begin{equation}
\label{eq:distributed-tuple-inner}
\begin{aligned}
\min_{\{z_i\in\mathcal Z\}_{i\in\mathcal I}}\quad
&\frac1N\sum_{i=1}^N t_i\\
\mathrm{s.t.}\quad
&h_i^w(\theta_i)\leq t_i,\qquad \forall i\in\mathcal I,\\
&z_i=z_j,\qquad \forall(i,j)\in\mathcal E_{\mathcal G}.
\end{aligned}
\end{equation}
At consensus, its objective reduces to the common epigraph
value \(t\).

\par\smallskip
\begin{proposition}[Equivalence of the consensus formulation]
\label{prop:distributed-equivalence}
If $\mathcal G$ is connected, the feasible sets of
\eqref{eq:tuple-inner-epigraph} and
\eqref{eq:distributed-tuple-inner} correspond under replication
$z_i=(\theta,t)$ for every $i\in \mathcal I$, with identical objective values.
Therefore, their optimal values equal $G(w)$, and their
optimal solutions correspond to this replication.
\end{proposition}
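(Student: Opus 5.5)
The argument is a direct bijection between feasible sets, followed by a comparison of objective values. The first step is to show that connectivity of $\mathcal G$ forces consensus. If $\{z_i\}_{i\in\mathcal I}$ is feasible for \eqref{eq:distributed-tuple-inner}, then $z_i=z_j$ holds for every edge. Take any two players $i,k$. Connectivity gives a path $i=v_0,v_1,\ldots,v_m=k$ in $\mathcal G$, and chaining the edge equalities $z_{v_{r-1}}=z_{v_r}$ along it yields $z_i=z_k$. So every feasible point has the form $z_i=(\theta,t)$ for one common pair $(\theta,t)\in\mathcal Z=\Theta\times\mathcal T$.

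Next I verify the correspondence in both directions. Given a consensus-feasible point with common value $(\theta,t)$, each local constraint $h_i^w(\theta_i)\le t_i$ becomes $h_i^w(\theta)\le t$ for all $i\in\mathcal I$. Together with $\theta\in\Theta$ and $t\in\mathcal T$, this says $(\theta,t)$ is feasible for \eqref{eq:tuple-inner-epigraph}. Conversely, if $(\theta,t)$ is feasible for \eqref{eq:tuple-inner-epigraph}, set $z_i=(\theta,t)$ for every $i$. This replication lies in $\mathcal Z^N$, satisfies every edge constraint trivially, and satisfies every local inequality. The two maps are mutually inverse, so replication is a bijection between the feasible sets. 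Under it the objective is $\frac1N\sum_i t_i=\frac1N\cdot Nt=t$, so corresponding points have the same objective value. The optimal values therefore coincide, and a point is optimal for one problem exactly when its image is optimal for the other.

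The remaining step, which needs the most care, is identifying this common optimal value with $G(w)$. Fix $\theta\in\Theta$. The smallest feasible $t$ in \eqref{eq:tuple-inner-epigraph} is $\max_i h_i^w(\theta)=F(\theta,w)$, and this value is admissible because $t_{\min}\le h_i^w(\theta)\le t_{\max}$ implies $F(\theta,w)\in\mathcal T$. Minimizing over $\theta$ then gives $\min_{\theta\in\Theta}F(\theta,w)=G(w)$. The minimum is attained because $F(\cdot,w)$ is continuous, being a finite maximum of affine functions, and $\Theta$ is compact. Consequently the feasible sets are nonempty and compact, so both optimal values are attained, and both equal $G(w)$.
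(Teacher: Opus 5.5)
Your proof is correct and follows the paper's argument: connectivity forces $z_1=\cdots=z_N$, and replication together with its inverse gives a bijection of feasible sets that preserves the objective $\frac1N\sum_i t_i=t$. Your last step checks, via $F(\theta,w)\in\mathcal T$ and compactness of $\Theta$, that the value of \eqref{eq:tuple-inner-epigraph} is attained and equals $G(w)$; the paper takes this from the derivation preceding \eqref{eq:tuple-inner-epigraph}, so it is a welcome but inessential addition.
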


\subsubsection{Applying existing distributed optimization oracle}
\label{subsec:relation-constrained}

Problem~\eqref{eq:distributed-tuple-inner} can be
solved using existing distributed optimization methods.
For fixed $w$, each constraint $h_i^w(\theta_i)\leq t_i$ is
equivalent to the $d+1$ affine inequalities
\begin{align}\label{eq: constraints}
\langle\Delta_\psi(y_j;i),\theta_i\rangle-t_i
\leq\Delta_{u_i}(y_j;i),
\quad j=1,\ldots,d+1.
\end{align}
These inequalities constitute an agent-specific affine conic
constraint with cone $\mathbb R_-^{d+1}$. Hence, the method
of~\cite{aybat2016primal} applies directly to each finite-tuple
subproblem. More recent methods such as
D-APDB~\cite{xu2026accelerated} accommodate more general private
functional constraints.

\begin{remark}
\label{rem:preserve-max}
Our proposed method retains the finite-maximum representation of
$h_i^w$. Applying a general affine-conic method to the expanded
formulation in \eqref{eq: constraints} would associate a $(d+1)$-dimensional constraint-dual
variable with each player, while our method uses one scalar
multiplier and obtains a subgradient from an active tuple element.
Hence, the local dual state is $\mathcal O(1)$ rather than
$\mathcal O(d)$, although evaluating the maximum may still require
examining all $d+1$ deviations. Our objective is not to obtain a
faster general-purpose convergence rate, but to construct a simpler
distributed oracle tailored to the finite-tuple subproblems.
\end{remark}

\subsubsection{Proposed distributed oracle}
Inspired from \cite{khuzani2017distributedregularizedprimaldualmethod,
Mateos2017saddlepoint,nedic2009subgradient}, we propose a primal dual update based oracle on regularized Lagrangian function for~\eqref{eq:distributed-tuple-inner}.
Let $\mu_i\geq0$ denote the multiplier associated with player
$i$'s constraint $h_i^w(\theta_i)\leq t_i$, and collect
$\mu:=(\mu_i)_{i\in\mathcal I}\in\mathbb R_+^N$.
For $\beta_k>0$, define
\begin{equation}
\label{eq:regularized-lagrangian}
\mathcal L_{\beta_k}(Z,\mu)
:=
\sum_{i\in\mathcal I}
\left[
\frac{t_i}{N}
+\mu_i\bigl(h_i^w(\theta_i)-t_i\bigr)
-\frac{\beta_k}{2}\mu_i^2
\right].
\end{equation}

Algorithm~1 implements distributed projected
descent--ascent on $\mathcal L_{\beta_k}$. 
At iteration $k$, let $Z[k]$ have rows $z_i[k]^\top$ and
let $\mu[k]:=(\mu_i[k])_{i\in\mathcal I}$.
Specifically, lines 4--6 compute \(
g_i[k]\in
\partial_{z_i}\mathcal L_{\beta_k}(Z[k],\mu[k]),\)
and line 7 computes the dual ascent direction \(
r_i[k]
=
\nabla_{\mu_i}\mathcal L_{\beta_k}(Z[k],\mu[k]).\)
Accordingly, line 8 performs projected primal descent with a
consensus correction, while line 9 performs projected dual ascent.
After the primal--dual iterations, lines~12--15 compute the
stepsize-weighted ergodic iterate of each player. Line~16 forms and
broadcasts the network-averaged potential parameter, line~17 evaluates
the corresponding oracle estimate, and line~18 returns the oracle
outputs.

\begin{algorithm}[t]
\caption{DRPDM Inner Oracle}
\label{alg:drpdm}
\begin{algorithmic}[1]
\Require $K\geq1$, $c,\beta_0>0$,
$\tau\in(0,1)$, $q\in(0,\tau/2)$,
$c\beta_0\leq1$, and
$\sigma\in\bigl(0,\lambda_{\max}^{-1}(L_{\mathcal G})\bigr]$;
set $\eta_k=c(k+1)^{-\tau}$ and
$\beta_k=\beta_0(k+1)^{-q}$.
\State Initialize $z_i[0]\in\mathcal Z$ and $\mu_i[0]=0$
\For{$k=0,\ldots,K-1$}
    \ForAll{$i\in\mathcal I$ \textbf{in parallel}}
        \State $j_i^\star[k]\in
        \arg\max_{1\leq j\leq d+1}\ell_i(\theta_i[k],y_j)$
        \State $s_i[k]\gets\Delta_\psi(y_{j_i^\star[k]};i)$
        \State $g_i[k]\gets
        \bigl(\mu_i[k]s_i[k],\,1/N-\mu_i[k]\bigr)$
        \State $r_i[k]\gets
        h_i^w(\theta_i[k])-t_i[k]-\beta_k\mu_i[k]$
        \State \(z_i[k+1]\gets
P_{\mathcal Z}\!\Bigl[z_i[k]-\eta_kg_i[k]\)
\Statex \hspace{\algorithmicindent}
\(\displaystyle
+\sigma\sum_{j\in\mathcal N_i}
[A_{\mathcal G}]_{ij}
\bigl(z_j[k]-z_i[k]\bigr)\Bigr]\)
        \State $\mu_i[k+1]\gets
        [\mu_i[k]+\eta_kr_i[k]]_+$
    \EndFor
\EndFor
\State $S_K\gets\sum_{k=0}^{K-1}\eta_k$
\ForAll{$i\in\mathcal I$ \textbf{in parallel}}
    \State $\bar z_i[K]\gets
    S_K^{-1}\sum_{k=0}^{K-1}\eta_kz_i[k]
    =:(\bar\theta_i[K],\bar t_i[K])$
\EndFor
\State $\bar\theta_w[K]\gets N^{-1}\sum_i\bar\theta_i[K]$;
broadcast $\bar\theta_w[K]$
\State Evaluate \(h_i^w(\bar\theta_w[K])\) locally; aggregate
\(\widehat G_K(w)\gets\max_i h_i^w(\bar\theta_w[K])\)
\State \Return $\bar\theta_w[K]$, $\widehat G_K(w)$,
and $\{\bar z_i[K]\}_{i\in\mathcal I}$
\end{algorithmic}
\end{algorithm}

\subsubsection{Finite-time analysis}
\label{subsec:finite-time-analysis}
The finite-budget output of Algorithm~\ref{alg:drpdm} is
generally not the exact tuple value $G(w)$. Since
$\bar\theta_w[K]\in\Theta$, the definition of $G(w)$ gives
\begin{equation}
\widehat G_K(w)
:=
F(\bar\theta_w[K],w)
\geq G(w).
\label{eq:finite-budget-upper-estimate}
\end{equation}
Thus, $\widehat G_K(w)$ is an upper estimate of $G(w)$ generated
by the finite-budget inner solver. The following result controls
this overestimation uniformly over $w\in\mathcal W$.

\par\smallskip
\begin{theorem}[Uniform inner-oracle accuracy]
\label{thm:drpdm-convergence}
Under the standing assumptions, run Algorithm~\ref{alg:drpdm}
with $\tau=3/4$ and $q=1/4$.  Then, for every \(w \in \mathcal W\)
\begin{equation}
\label{eq:uniform-oracle-rate}
0\leq \widehat G_K(w)-G(w)
=\widetilde{\mathcal O}\!\left(K^{-1/4}\right).
\end{equation}
\end{theorem}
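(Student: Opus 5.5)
The lower bound $0\le \widehat G_K(w)-G(w)$ is already \eqref{eq:finite-budget-upper-estimate}. For the upper bound, the plan is a standard regularized primal--dual analysis applied to the stacked iterate $Z[k]$ and $\mu[k]$. Every constant will be made independent of $w$, which gives uniformity. The uniform constants we will use are:
\begin{itemize}
\item $D_{\mathcal Z}:=\operatorname{diam}(\mathcal Z)$, which is finite by compactness of $\Theta$ and $\mathcal T$;
\item a subgradient bound $\|\Delta_\psi(y;i)\|\le L_\psi:=2\sup_a\|\psi(a)\|$;
\item a Lipschitz constant $L_\psi$ for every $h_i^w$;
\item a bound $|h_i^w(\theta)-t|\le t_{\max}-t_{\min}$ on $\mathcal Z$.
\end{itemize}
None of these depends on the tuple $w$.

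\textbf{Step 1: dual boundedness.} Because of the regularization, $\mu_i[k+1]\le[(1-\eta_k\beta_k)\mu_i[k]+\eta_k C_h]_+$. With $c\beta_0\le1$, an induction gives $\mu_i[k]\le C_h/\beta_k=\mathcal O(k^{q})$. Together with the primal-gradient bound, this yields
\[
\|g_i[k]\|\le 1/N+\mu_i[k]\,(L_\psi+1)=\mathcal O(k^q).
\]

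\textbf{Step 2: consensus error.} Write $\bar z[k]=N^{-1}\sum_i z_i[k]$. The primal step is a projected version of $Z-\eta_k G-\sigma L_{\mathcal G}Z$, with $\sigma\le\lambda_{\max}^{-1}$. The matrix $I-\sigma L_{\mathcal G}$ contracts the disagreement subspace by the factor $\rho:=1-\sigma\lambda_2(L_{\mathcal G})<1$. Using nonexpansiveness of $P_{\mathcal Z}$ and the perturbed-consensus argument of \cite{nedic2009subgradient}, we expect
\[
\|z_i[k]-\bar z[k]\|\le C\sum_{s<k}\rho^{k-s}\eta_s\|g[s]\|=\mathcal O(k^{-\tau+q}).
\]
This is the main obstacle: projection acts on each row separately, so it does not commute with averaging. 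I would handle this as in Nedić--Ozdaglar, by writing $P_{\mathcal Z}(v)=v+e$ with $\|e\|$ bounded by the distance of the pre-projection point to its consensus-averaged counterpart.

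\textbf{Step 3: saddle inequality.} Fix a solution $z^\star=(\theta^\star_w,G(w))$ replicated at every node, and any $\mu\ge0$. Expanding $\|z_i[k+1]-z^\star\|^2$ and $(\mu_i[k+1]-\mu_i)^2$, using convexity in $Z$ and concavity in $\mu$ of $\mathcal L_{\beta_k}$, and using that the Laplacian term satisfies $-\langle L_{\mathcal G}Z,Z-\mathbf 1 z^{\star\top}\rangle\le0$, we obtain
\[
\sum_k\eta_k\bigl[\mathcal L_{\beta_k}(Z[k],\mu)-\mathcal L_{\beta_k}(Z^\star,\mu[k])\bigr]\le D_{\mathcal Z}^2+\|\mu\|^2+\sum_k\eta_k^2\bigl(\|g[k]\|^2+\|r[k]\|^2\bigr).
\]
Dividing by $S_K\asymp K^{1-\tau}$ and applying Jensen to the ergodic averages $\bar z_i[K]$ gives two things. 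Taking $\mu=0$ and using $\mathcal L(Z^\star,\mu[k])\le G(w)$ (which holds since $Z^\star$ is feasible) bounds the objective gap of $\bar t$. Taking $\mu_i=[h_i^w(\bar\theta_i)-\bar t_i]_+/\beta$ bounds the constraint violation; the $\beta_k\mu^2$ terms contribute $\mathcal O(\beta_K\|\mu\|^2)$ bias.

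\textbf{Step 4: the $\widehat G_K$ bound and exponents.} Combine the three estimates:
\[
\widehat G_K(w)=\max_i h_i^w(\bar\theta_w)\le \bar t+\max_i\bigl[h_i^w(\bar\theta_i)-\bar t_i\bigr]_++L_\psi\max_i\|\bar\theta_i-\bar\theta_w\|+\max_i|\bar t_i-\bar t|.
\]
The consensus terms are controlled by Step 2 after averaging. Using $\sum\eta_k^2k^{2q}\lesssim\log K$ when $2\tau-2q=1$, the error terms are $K^{-(1-\tau)}\log K$, $\beta_K$ and $k^{q-\tau}$. With $\tau=3/4$ and $q=1/4$, all of them are $\widetilde{\mathcal O}(K^{-1/4})$.

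Every constant came from Steps 1--3 and is $w$-independent, which gives \eqref{eq:uniform-oracle-rate}. I expect the delicate balancing to be in the violation term, where the $1/\beta$ scaling of $\mu$ must be traded against the regularization bias.
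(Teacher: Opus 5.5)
\textbf{Gap: the constraint-violation step.} The gap is in Step~3, at the point you flag yourself as unresolved. Write $c_i:=h_i^w(\bar\theta_i[K])-\bar t_i[K]$ and $\bar t:=N^{-1}\sum_i\bar t_i[K]$. After Jensen, your saddle inequality reads $\bar t-G(w)+\langle\mu,c\rangle\le R_K+Q_K\|\mu\|_2^2$ for all $\mu\ge0$, with $R_K,Q_K=\widetilde{\mathcal O}(K^{-1/4})$.

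To turn this into a bound on $[c]_+$ you need a \emph{lower} bound on $\bar t-G(w)$. The ergodic point is infeasible, so $\bar t$ can lie below $G(w)$. Your proposal offers nothing beyond $|\bar t-G(w)|\le t_{\max}-t_{\min}$. With only that, the best choice $\mu=[c]_+/(2Q_K)$ gives $\|[c]_+\|_2\le 2\sqrt{Q_K(R_K+t_{\max}-t_{\min})}=\widetilde{\mathcal O}(K^{-1/8})$. Step~4 then cannot reach $K^{-1/4}$.

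The missing idea is a dual bound, and the epigraph structure provides it for free. Since $\bar\theta_w[K]\in\Theta$, we have $G(w)\le F(\bar\theta_w[K],w)$. Your own Step~4 inequality then gives $G(w)-\bar t\le\max_i[c_i]_++(L_\psi+1)D_K$, where $D_K$ is the ergodic consensus error. Feed this back with $\mu=2\mathbf e_{i^\ast}$, $i^\ast\in\arg\max_i[c_i]_+$. This yields $\max_i[c_i]_+\le R_K+4Q_K+(L_\psi+1)D_K$, and your argument closes.

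The paper avoids the objective/violation split altogether. It plugs in $\mu=\mathbf e_{i_K}$, the unit vector at the player attaining $F(\bar\theta_w[K],w)$. For this choice $\mathcal L(\bar Z[K],\mathbf e_{i_K})=\bar t+c_{i_K}$ lies within $(G_\psi+1)D_K$ of $\widehat G_K(w)$. Hence $\widehat G_K(w)-G(w)\le R_K+Q_K+(G_\psi+1)D_K$ in one line.

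\textbf{Smaller points.} In Step~3, the sign of $-\langle L_{\mathcal G}Z,Z-\mathbf 1z^{\star\top}\rangle$ alone does not give your displayed inequality. Expanding the square also produces $\sigma^2\|L_{\mathcal G}Z\|_F^2$ and the cross term $2\eta_k\sigma\langle\mathsf G[k],L_{\mathcal G}Z[k]\rangle_F$, where $\mathsf G[k]$ stacks the primal directions. Both are absorbed by Young's inequality together with $\|L_{\mathcal G}Z\|_F^2\le\lambda_{\max}(L_{\mathcal G})\langle L_{\mathcal G}Z,Z\rangle_F$ and $\sigma\lambda_{\max}(L_{\mathcal G})\le1$. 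The cost is a factor $2$ on $\eta_k^2\|\mathsf G[k]\|_F^2$. Done this way, your saddle inequality is actually cleaner than Lemma~\ref{lem:ergodic-primal-dual}, whose $R_K$ carries the consensus term $\sum_k\eta_k(k+1)^{1/4}e_k$.

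Two of your stated orders are off, though neither affects the final rate. The ergodic consensus error is $D_K=\mathcal O(K^{-1/4})$, not $\mathcal O(K^{q-\tau})$, because the $\eta_k$-weighted average of $e_k$ is dominated by early iterates; Step~2 also omits the $\rho^k e_0$ term. The regularization bias is of order $S_K^{-1}(1+\sum_k\eta_k\beta_k)\|\mu\|_2^2$ rather than $\beta_K\|\mu\|_2^2$.

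Your Nedi\'c--Ozdaglar handling of the projection works, since $\sum_jP_{ij}z_j[k]\in\mathcal Z$ bounds the projection error by $\eta_k\|g_i[k]\|$. The paper instead uses that a common row-wise projection is pairwise nonexpansive, so it cannot increase disagreement.
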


\begin{proof}[Proof Sketch]
Define \(
\mathcal L(Z,\mu)
:=
\sum_{i\in\mathcal I}
\left[
\frac{t_i}{N}
+\mu_i\bigl(h_i^w(\theta_i)-t_i\bigr)
\right].\)
Let \(\bar Z[K]\) have rows \(\bar z_i[K]^\top\), and define
\(
\bar\mu[K]
:=
(S_K)^{-1}
\sum_{k=0}^{K-1}\eta_k\mu[k].
\)
Fix \(w\in\mathcal W\), and let \(
\theta_w^\star\in\arg\min_{\theta\in\Theta}F(\theta,w), \,
z_w^\star:=(\theta_w^\star,G(w)),
Z_c:=\mathbf 1(z_w^\star)^\top.\)
Since \(z_w^\star\) is feasible and
\(\bar\mu[K]\in\mathbb R_+^N\), \(
\mathcal L(Z_c,\bar\mu[K])\leq G(w).\)

To control the disagreement among the local primal iterates, we first
use the following bound on the dual iterates.
\par\smallskip
\begin{lemma}[Polynomial bound on dual iterates]
\label{lem:dual-iterate-bound}
Under the standing assumptions, there exists a constant
\(B_\mu>0\), independent of \(w\), \(i\), and \(k\), such that
\begin{equation}
    0\leq \mu_i[k]
    \leq \frac{B_\mu}{\beta_k}
    =\frac{B_\mu}{\beta_0}(k+1)^q,
    \qquad i\in\mathcal I,\quad k\geq0.
\end{equation}
\end{lemma}

Since the primal update direction \(g_i[k]\) depends linearly on
\(\mu_i[k]\), this bound implies
\(\|g_i[k]\|_2=\mathcal O((k+1)^q)\), which is the key input to the
following consensus estimate.

Define
\[
z_{\mathrm{av}}[k]
:=
\frac{1}{N}\sum_{i=1}^N z_i[k],
\quad
e_k
:=
\sqrt{
\frac{1}{N}\sum_{i=1}^N
\|z_i[k]-z_{\mathrm{av}}[k]\|_2^2
}.
\]
For the weighted ergodic iterates $\bar z_i[K]$ in Algorithm~\ref{alg:drpdm}, define
\[
\bar z_w[K]
:=
\frac{1}{N}\sum_{i\in\mathcal I}\bar z_i[K],
\qquad
D_K
:=
\frac{\sqrt{N}}{S_K}
\sum_{k=0}^{K-1}\eta_k e_k.
\]

\par\smallskip
\begin{lemma}[Consensus error]
\label{lem:consensus-error}
Under the standing assumptions, there exist constants
\(C_e,C_D>0\), independent of \(w\), \(K\), and the initial
primal points, such that
\begin{equation}
e_k\leq C_e(k+1)^{-1/2},
\qquad k\geq0,
\label{eq:pointwise-consensus}
\end{equation}
and for \(K\geq 1,\)
\begin{equation}
\max_{i\in\mathcal I}
\|\bar z_i[K]-\bar z_w[K]\|_2
\leq
D_K
\leq C_DK^{-1/4}.
\label{eq:ergodic-consensus}
\end{equation}
\end{lemma}

Let \(i_K\) attain
\(
F(\bar\theta_w[K],w)
=
h_{i_K}^w(\bar\theta_w[K]),
\)
and let \(\mathbf e_{i_K}\) be the corresponding coordinate unit
vector. Then, by Lemma~\ref{lem:consensus-error},
\[
F(\bar\theta_w[K],w)
\leq
\mathcal L(\bar Z[K],\mathbf e_{i_K})
+(G_\psi+1)D_K.
\]
It remains to control the Lagrangian gap in the preceding bound.
For this, we use the following ergodic primal--dual inequality.
\par\smallskip
\begin{lemma}[Ergodic primal--dual inequality]
\label{lem:ergodic-primal-dual}
Under the conditions of
Theorem~\ref{thm:drpdm-convergence}, let
\(
Z_c:=\mathbf 1z^\top
\)
be any consensus comparator with \(z\in\mathcal Z\). Then, for every
\(\mu\in\mathbb R_+^N\),
\begin{equation}
\mathcal L(\bar Z[K],\mu)
-
\mathcal L(Z_c,\bar\mu[K])
\leq
R_K+Q_K\|\mu\|_2^2,
\label{eq:ergodic-primal-dual}
\end{equation}
where, for some constant \(C_0>0\) independent of \(w\), \(K\),
and the initial primal points, \(
R_K
:=
\frac{C_0}{S_K}
\left[
1+
\sum_{k=0}^{K-1}\eta_k^2(k+1)^{1/2}
+
\sum_{k=0}^{K-1}\eta_k(k+1)^{1/4}e_k
\right],\)
and \(
Q_K
:=
\frac{1}{S_K}
\left[
\frac{1}{2}
+
\frac{1}{4}
\sum_{k=0}^{K-1}\eta_k\beta_k
\right].\)
\end{lemma}

Applying Lemma~\ref{lem:ergodic-primal-dual} with
\(\mu=\mathbf e_{i_K}\), and using
\(\|\mathbf e_{i_K}\|_2=1\), yields \(
0
\leq
\widehat G_K(w)-G(w)
\leq
R_K+Q_K+(G_\psi+1)D_K.\)

Finally, under \(\tau=3/4\) and \(q=1/4\),
\(S_K=\Theta(K^{1/4})\). The standard harmonic-sum bounds
applied to the definitions above then give \(
R_K=\widetilde{\mathcal O}(K^{-1/4}),
Q_K=\widetilde{\mathcal O}(K^{-1/4}).\)
Together with \(D_K=\mathcal O(K^{-1/4})\), this proves
\[
0\leq \widehat G_K(w)-G(w)
=\widetilde{\mathcal O}(K^{-1/4})
\]
uniformly over \(w\in\mathcal W\).
\end{proof}

\subsection{Outer Optimization}
\label{subsec:outer-optimization}

\subsubsection{Global Search with EXOTIC}
\label{subsec:exotic}

We first consider EXOTIC~\cite{EXOTIC} as a globally
convergent method for solving the outer problem
\eqref{eq:outer-optimal-value}.
EXOTIC adaptively explores a hierarchical partition of
$\mathcal W$. At a candidate tuple $w$, it invokes the
distributed inner oracle with an iteration budget $K$ and
obtains an approximate value $\widehat G_K(w)$.
It progressively allocates computation between exploring
new regions of $\mathcal W$ and refining the accuracy of
promising oracle evaluations.

Let $n$ denote the total computational budget, measured
as the cumulative number of synchronized inner iterations
used across all oracle queries, including the final
re-evaluations. Given $n$, EXOTIC determines a search
horizon $h_{\max}$ and adaptively allocates inner iteration
budgets across the nodes of the partition tree. After the
tree-search phase, selected candidates are re-evaluated
with a larger budget $K_{\mathrm{final}}$, and the tuple
with the largest refined estimate is returned.
We defer the complete EXOTIC procedure and its budget
allocation rule to the Appendix~\ref{app:exotic-algorithm}.

To invoke the convergence guarantee of EXOTIC from \cite{EXOTIC}, we verify
two properties of the outer problem. First, the objective
$G$ must vary regularly within cells of the hierarchical
partition. Second, the number of cells that remain
near-optimal at increasingly fine resolutions must be
controlled. We establish these properties below.

Equip $\mathcal W$ with the metric \(
d_{\mathcal W}(w,\widetilde w)
:=
\max_{1\leq j\leq d+1}
\|y_j-\widetilde y_j\|_2,\)
and let $L_u$ and $L_\psi$ denote common Lipschitz
constants of the utilities and the feature map,
respectively.

\begin{lemma}[Regularity of the outer objective]
\label{lem:outer-lipschitz}
Let \(
C_\theta:=\max_{\theta\in\Theta}\|\theta\|_2.\)
Then, for all $w,\widetilde w\in\mathcal W$,
\begin{equation}
\label{eq:outer-lipschitz}
|G(w)-G(\widetilde w)|
\leq
L_{\mathrm{game}}
d_{\mathcal W}(w,\widetilde w),
\end{equation}
where \(
L_{\mathrm{game}}
:=
2(C_\theta L_\psi+L_u).\)
\end{lemma}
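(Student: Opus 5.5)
We will prove Lemma~\ref{lem:outer-lipschitz} by first establishing a uniform-in-\(\theta\) Lipschitz bound on the signed errors \(\ell_i(\theta,\cdot)\), and then passing this bound through the finite maximum defining \(F(\theta,\cdot)\) and the minimum defining \(G\). The argument is a standard stability estimate for max/min operations and uses only the Lipschitz continuity of \(u_i\) and \(\psi\) together with compactness of \(\Theta\).

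\textbf{Step 1 (Lipschitz bound on each deviation error).} Fix \(i\in\mathcal I\) and \(\theta\in\Theta\), and let \(y=(a,a')\) and \(\tilde y=(\tilde a,\tilde a')\). Write
\[
\ell_i(\theta,y)=\langle\theta,\Delta_\psi(y;i)\rangle-\Delta_{u_i}(y;i).
\]
The unilateral difference \(\Delta_\psi(y;i)=\psi(a_i',a_{-i})-\psi(a_i,a_{-i})\) involves two evaluation points. The point \((a_i',a_{-i})\) differs from \((\tilde a_i',\tilde a_{-i})\) by at most \(\|y-\tilde y\|_2\) in Euclidean norm, and the same holds for the second point. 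Hence
\[
\|\Delta_\psi(y;i)-\Delta_\psi(\tilde y;i)\|_2\le 2L_\psi\|y-\tilde y\|_2,
\qquad
|\Delta_{u_i}(y;i)-\Delta_{u_i}(\tilde y;i)|\le 2L_u\|y-\tilde y\|_2 .
\]
Combining these with Cauchy--Schwarz and \(\|\theta\|_2\le C_\theta\) gives
\[
|\ell_i(\theta,y)-\ell_i(\theta,\tilde y)|\le 2(C_\theta L_\psi+L_u)\|y-\tilde y\|_2=L_{\mathrm{game}}\|y-\tilde y\|_2 .
\]
This bound is uniform in \(\theta\) and \(i\).

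\textbf{Step 2 (pass to \(F\)).} We use the elementary fact that a maximum of functions with common Lipschitz constant \(L\) is itself \(L\)-Lipschitz, since \(|\max_k a_k-\max_k b_k|\le\max_k|a_k-b_k|\). Applying this over the index pairs \((i,j)\) yields
\[
|F(\theta,w)-F(\theta,\tilde w)|\le\max_{j}L_{\mathrm{game}}\|y_j-\tilde y_j\|_2=L_{\mathrm{game}}\,d_{\mathcal W}(w,\tilde w)
\quad\text{for every }\theta\in\Theta .
\]

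\textbf{Step 3 (pass to \(G\)).} The analogous fact for minima, \(|\min_\theta A(\theta)-\min_\theta B(\theta)|\le\sup_\theta|A(\theta)-B(\theta)|\), transfers the bound to \(G\). To verify it directly, let \(\tilde\theta\) attain \(G(\tilde w)\); the minimum is attained by compactness of \(\Theta\) and continuity of \(F(\cdot,\tilde w)\). Then
\[
G(w)\le F(\tilde\theta,w)\le F(\tilde\theta,\tilde w)+L_{\mathrm{game}}d_{\mathcal W}(w,\tilde w)=G(\tilde w)+L_{\mathrm{game}}d_{\mathcal W}(w,\tilde w).
\]
Exchanging the roles of \(w\) and \(\tilde w\) gives the reverse inequality, which proves \eqref{eq:outer-lipschitz}.

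\textbf{Main obstacle.} The only delicate point is the constant in Step 1. The factor \(2\) comes from the two evaluation points in each unilateral difference, and one must check that moving from \(y\) to \(\tilde y\) shifts each evaluation point by at most \(\|y-\tilde y\|_2\) in the product-space norm. We take the Lipschitz constants with respect to the Euclidean norm on \(\mathcal A\), under which this holds because each evaluation point is a subvector of \(y\). The remaining steps are routine.
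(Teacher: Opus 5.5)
Your proposal is correct and follows the paper's proof essentially step for step: a uniform-in-\(\theta\) Lipschitz bound on each \(\ell_i(\theta,\cdot)\) with constant \(2(C_\theta L_\psi+L_u)\), then propagation through the finite maximum to get \(|F(\theta,w)-F(\theta,\widetilde w)|\le L_{\mathrm{game}}d_{\mathcal W}(w,\widetilde w)\) for every \(\theta\), and finally the minimizer-swap argument (evaluate at a minimizer of \(F(\cdot,\widetilde w)\), then exchange roles) to transfer the bound to \(G\). Your check that each evaluation point of the unilateral difference moves by at most \(\|y-\widetilde y\|_2\), which accounts for the factor \(2\), spells out a detail that the paper's proof leaves implicit.
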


EXOTIC uses a $B$-ary hierarchical partition
$\{\mathcal W_{h,r}\}_{r=1}^{B^h}$ of $\mathcal W$,
where $h$ denotes the depth. Assume that the cell
diameters decrease geometrically:
\begin{equation}
\operatorname{diam}_{d_{\mathcal W}}
(\mathcal W_{h,r})
\leq
\alpha_{\mathcal W}\rho^h,
\qquad
\alpha_{\mathcal W}>0,\quad 0<\rho<1.
\label{eq:partition-diameter}
\end{equation}
By Lemma~\ref{lem:outer-lipschitz}, if
$\mathcal W_{h,r}$ contains an outer maximizer $w^\star$,
then
\[
G(w^\star)-G(w)
\leq
\nu_{\mathrm{game}}\rho^h,
\qquad
w\in\mathcal W_{h,r},
\]
where \(
\nu_{\mathrm{game}}
:=
\alpha_{\mathcal W}L_{\mathrm{game}}.\)
Thus, the variation of $G$ within an optimal cell
decreases geometrically with the partition depth.

It remains to characterize how many cells can remain
near-optimal at a given depth. Define \(
\mathcal N_h(\varepsilon)
:=
\left\{
r:
\max_{w\in\mathcal W_{h,r}}G(w)
\geq p^\star-\varepsilon
\right\}.\)
Following~\cite[Definition~3.2]{EXOTIC}, let
$d_G\geq0$ be an admissible near-optimality exponent if,
for some $C_G>1$,
\begin{equation}
\left|
\mathcal N_h(2\nu_{\mathrm{game}}\rho^h)
\right|
\leq
C_G\rho^{-d_Gh},
\qquad h\geq0.
\label{eq:near-optimality-dimension}
\end{equation}
The exponent $d_G$ captures the effective difficulty of
the outer search: larger $d_G$ allows more cells to remain
competitive as the partition is refined.
We can now combine these outer regularity properties with
the uniform accuracy of the distributed inner oracle.

\begin{theorem}[Global outer-value guarantee]
\label{thm:exotic-global}
Suppose that the assumptions of
Theorem~\ref{thm:drpdm-convergence} hold and that the
hierarchical partition satisfies
\eqref{eq:partition-diameter} and
\eqref{eq:near-optimality-dimension}.
Run EXOTIC with Algorithm~\ref{alg:drpdm} as its inner
oracle and total inner-iteration budget $n$.
Then, for the returned tuple $\widehat w_n$,
\[
0
\leq
p^\star-G(\widehat w_n)
=
\widetilde{\mathcal O}
\left(
n^{-1/(d_G+4)}
\right),
\]
with all problem and algorithm constants fixed.
\end{theorem}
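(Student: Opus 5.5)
The plan is to reduce Theorem~\ref{thm:exotic-global} to the generic convergence theorem of EXOTIC~\cite{EXOTIC} for noisy or biased black-box maximization over a hierarchical partition. That theorem takes three inputs: a local regularity (smoothness) condition along optimal cells; a near-optimality dimension; and an oracle whose error decays polynomially in the per-query budget, uniformly over the domain. The first two are exactly \eqref{eq:partition-diameter} combined with Lemma~\ref{lem:outer-lipschitz}, which gives $G(w^\star)-G(w)\le\nu_{\mathrm{game}}\rho^h$ on any depth-$h$ cell containing a maximizer, together with \eqref{eq:near-optimality-dimension}. The third is Theorem~\ref{thm:drpdm-convergence}: for all $w$, $0\le\widehat G_K(w)-G(w)\le C_{\mathrm{or}}K^{-1/4}\log^{c}(K+1)$. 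The constant $C_{\mathrm{or}}$ is independent of $w$ because every constant in Lemmas~\ref{lem:dual-iterate-bound}--\ref{lem:ergodic-primal-dual} is.

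The lower bound $0\le p^\star-G(\widehat w_n)$ is immediate from \eqref{eq:outer-optimal-value}, since $\widehat w_n\in\mathcal W$.

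For the upper bound, I will decompose the error along the path to the returned tuple. Let $w_n^\dagger$ be the best candidate among those re-evaluated at budget $K_{\mathrm{final}}$. Since $\widehat w_n$ maximizes the refined estimates, one-sidedness gives
\[
G(\widehat w_n)\ge \widehat G_{K_{\mathrm{final}}}(\widehat w_n)-\epsilon_{K_{\mathrm{final}}}
\ge \widehat G_{K_{\mathrm{final}}}(w_n^\dagger)-\epsilon_{K_{\mathrm{final}}}
\ge G(w_n^\dagger)-\epsilon_{K_{\mathrm{final}}},
\]
where $\epsilon_K:=C_{\mathrm{or}}K^{-1/4}\log^c(K+1)$. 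Thus
\[
p^\star-G(\widehat w_n)\le \bigl(p^\star-G(w_n^\dagger)\bigr)+\epsilon_{K_{\mathrm{final}}}.
\]
The tree-search term $p^\star-G(w_n^\dagger)$ is handled by the standard argument. At depth $h$, EXOTIC needs oracle accuracy of order $\nu_{\mathrm{game}}\rho^h$ to discriminate cells, which requires $K_h\asymp(\nu_{\mathrm{game}}\rho^h/C_{\mathrm{or}})^{-4}$ up to logs. Any cell that is expanded must lie in $\mathcal N_h(2\nu_{\mathrm{game}}\rho^h)$, so by \eqref{eq:near-optimality-dimension} depth $h$ costs at most $C_G\rho^{-d_Gh}K_h=\widetilde{\mathcal O}(\rho^{-(d_G+4)h})$ inner iterations. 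Summing the geometric series up to $h_{\max}$ and setting this total equal to $n$ gives $\rho^{h_{\max}}=\widetilde\Theta(n^{-1/(d_G+4)})$. The optimal-cell argument then gives $p^\star-G(w_n^\dagger)=\mathcal O(\nu_{\mathrm{game}}\rho^{h_{\max}})$. If $K_{\mathrm{final}}$ is chosen as a fixed fraction of $n$, or at least of order $\rho^{-4h_{\max}}$, the final-evaluation term is of the same order.

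The main obstacle is the bookkeeping that matches EXOTIC's budget rule to our oracle. The rule in~\cite{EXOTIC} is stated for a generic oracle exponent, and I must check that our one-sided bias, with rate exponent $1/4$ and polylogarithmic factors, fits its hypotheses. One-sidedness is actually harmless and even convenient, since upper estimates never underestimate an optimal cell. The logarithmic factors propagate into the $\widetilde{\mathcal O}$. I will first try to instantiate the general EXOTIC theorem with oracle rate exponent $1/4$. That instantiation should yield exponent $1/(d_G+1/(1/4))=1/(d_G+4)$. If it does not apply verbatim, I will redo the depth-wise counting above explicitly.
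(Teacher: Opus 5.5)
Your proposal follows the paper's proof: both reduce to EXOTIC's Theorem~3.3/Corollary~3.5 using three inputs (the uniform one-sided $\widetilde{\mathcal O}(K^{-1/4})$ oracle bound, Lemma~\ref{lem:outer-lipschitz} with \eqref{eq:partition-diameter}, and \eqref{eq:near-optimality-dimension}), count $\widetilde{\mathcal O}(\rho^{-(d_G+4)h})$ inner work per depth, and add the $\widetilde{\mathcal O}(K_{\mathrm{final}}^{-1/4})$ re-evaluation error with $K_{\mathrm{final}}=\widetilde\Theta(n)$; your comparison with the best re-evaluated candidate $w_n^\dagger$ is just an inlined form of the paper's split into $p^\star-\widehat G_n$ and $\widehat G_n-G(\widehat w_n)$. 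One small correction: under EXOTIC's budget rule, $h_{\max}=\lfloor 2n/(5B(1+\log n)^2)\rfloor=\widetilde\Theta(n)$, so $\rho^{h}=\widetilde{\mathcal O}(n^{-1/(d_G+4)})$ holds for the depth actually resolved along the optimal branch, not for the parameter $h_{\max}$, and your alternative requirement $K_{\mathrm{final}}\gtrsim\rho^{-4h_{\max}}$ should be dropped.
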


\begin{proof}[Proof sketch]
Theorem~\ref{thm:drpdm-convergence} gives the uniform
one-sided oracle bound \(
0
\leq
\widehat G_K(w)-G(w)
=
\widetilde{\mathcal O}(K^{-1/4})\), for every \(
w\in\mathcal W.\)
Lemma~\ref{lem:outer-lipschitz} together with
\eqref{eq:partition-diameter} verifies the geometric
variation condition required by EXOTIC, while
\eqref{eq:near-optimality-dimension} controls the number
of near-optimal cells at each depth. Substituting the
inner-oracle exponent $1/4$ into
\cite[Theorem~3.3 and Corollary~3.5]{EXOTIC}
gives the stated rate, with logarithmic factors absorbed
into $\widetilde{\mathcal O}$.
\end{proof}

Theorem~\ref{thm:exotic-global} guarantees convergence
of the \emph{outer value} $G(\widehat w_n)$ to
$p^\star$. The associated potential parameter is obtained
from the final inner solve, \(
\widehat\theta_n
:=
\bar\theta_{\widehat w_n}[K_{\mathrm{final}}].
\)
The theorem does not, by itself, provide a finite-budget
bound on the full-domain error
$\mathcal E(\widehat\theta_n)$.
Exact optimizer recovery is characterized by
Proposition~\ref{prop:reformulation-recovery}; obtaining
a quantitative finite-budget recovery guarantee would
require additional stability or growth conditions.
\subsubsection{Projected Zeroth-Order Ascent}
\label{subsec:zeroth-order}
Although EXOTIC provides a global outer-search guarantee, its
hierarchical partition can become computationally expensive when
$\mathcal W$ is high-dimensional. We therefore also consider a
projected zeroth-order ascent method that uses only evaluations of
the distributed oracle and avoids maintaining a partition tree.

Assume that $\mathcal W\subseteq\mathbb R^{D_{\mathcal W}}$ is convex
with nonempty interior, and define the $\delta$-interior \(
\mathcal W_\delta
:=
\left\{
w\in\mathcal W:
w+\delta v\in\mathcal W
\text{ for all } \|v\|_2\leq1
\right\}.\)
At iteration $t$, we sample \(
v_t\sim\operatorname{Unif}(\mathbb S^{D_{\mathcal W}-1})\)
and query the distributed oracle at
$w_t$ and
\(
w_t^+:=w_t+\delta_t v_t.
\)
Let $\widehat G_t^0$ and $\widehat G_t^+$ denote the corresponding
oracle estimates. We construct the zeroth-order ascent direction \(
\widehat g_t
:=
\frac{D_{\mathcal W}}{\delta_t}
\bigl(
\widehat G_t^+-\widehat G_t^0
\bigr)v_t,\)
and update \(
w_{t+1}
:=
P_{\mathcal W_{\delta_{t+1}}}
\left(
w_t+\alpha_t\widehat g_t
\right),\)
where $\delta_t$ is the perturbation radius and $\alpha_t$ is the
outer stepsize.

Unlike EXOTIC, this procedure does not perform a global search over
$\mathcal W$ and requires only two oracle evaluations per outer
iteration. It therefore provides a more scalable alternative for
high-dimensional problems. In this work, however, we use the method
as a heuristic and do not establish global-optimality or stationarity
guarantees for the resulting inexact projected updates.
The complete procedure, including the decay schedules for
$\delta_t$ and $\alpha_t$ and the final re-evaluation step, is given in Appendix~\ref{app:zo-algorithm}.

\section{Numerical Experiments}
\label{sec:numerical-experiments}
We evaluate the framework in three complementary settings.
A two-player benchmark with known optimum compares EXOTIC and
zeroth-order (ZO) search and assesses parameter recovery; a
three-player linear--quadratic game tests improvement over an
analytical $\alpha$-potential construction; and a 50-player game
examines the scalability of ZO search in a high-dimensional
potential class.

\subsection{Experimental Protocol}
\label{subsec:experimental-protocol}

All experiments use Algorithm~\ref{alg:drpdm} as the distributed
inner solver with
\(c=1\), \(\tau=\tfrac12\), \(\beta_0=0.125\), and \(q=\tfrac16\),
unless stated otherwise; the consensus stepsize $\sigma$ is chosen
separately for each communication graph. Numerically, we use the
uniform ergodic average \(
\bar z_i^{\mathrm{num}}[K]
:=
K^{-1}\sum_{k=0}^{K-1}z_i[k],\)
instead of the stepsize-weighted average used in the analysis.

For EXOTIC, we use a deterministic binary midpoint partition with
\(h_{\max}\in\{500,1000,2000\}\) and
\(s\in\{10^3,10^4\}\), where a nominal inner budget $K$
corresponds to $sK$ DRPDM iterations. For ZO search, we set
\(\delta_t=\delta_0(t+1)^{-1/4}\) and
\(\alpha_t=\alpha_0(t+1)^{-4/5}\), and cold-start each inner solve.
Experiment-specific values of $\alpha_0$, $\delta_0$, and the
iteration budgets are given below.
\subsection{Two-Player Benchmark with Known Optimum}
\label{subsec:two-player-experiment}

Our first experiment considers a controlled benchmark for which the
optimal full-domain approximation and corresponding potential parameter
can be computed exactly. This allows us to evaluate both objective and
parameter recovery and to compare the two outer-search methods.

Consider a two-player, two-action game, where \(x_i\in[0,1]\) denotes
the probability that player \(i\) selects its first action. The
mixed-strategy utilities are
\(
u_1(x_1,x_2)
=12-4x_1-10x_2+10x_1x_2
\)
and
\(
u_2(x_1,x_2)
=10-2x_1-8x_2+2x_1x_2.
\)
We use
\(
\psi(x_1,x_2)=(x_1,x_2,x_1x_2)^\top
\)
and
\(
\phi_\theta(x_1,x_2)
=
\langle\theta,\psi(x_1,x_2)\rangle.
\)
Because these utilities are multilinear extensions of a finite
\(2\times2\) game, the optimal approximation can be computed exactly
using the finite-game linear program of~\cite{candogan2011near}.
For the feature class above, this yields
\(
\theta^\star=(-2,-10,6)^\top
\)
and
\(
p^\star=2.
\)
Consequently, we can evaluate both the full-domain optimality gap
\(\mathcal E(\widehat\theta)-p^\star\) and the parameter-recovery error
\(\|\widehat\theta-\theta^\star\|_2\).

The players communicate over a single undirected edge. We set
\(\sigma=0.45\) and \(\Theta=[-25,25]^3\). Since
\(x_i,x_i'\in[0,1]\), the resulting outer search space is
\(\mathcal W=\mathcal Y^{d+1}=[0,1]^{16}\).
For ZO search, we use \(T=2000\),
\(K_{\mathrm{inner}}=10^4\), and
\(K_{\mathrm{reeval}}=10^5\), and test
\(
\alpha_0\in\{0.01,0.03,0.1\}
\)
and
\(
\delta_0\in\{0.0125,0.025,0.05\}.
\)
For EXOTIC, we use the parameter grid specified in
Section~\ref{subsec:experimental-protocol}.

\begin{figure}[h]
    \centering
    \includegraphics[width=0.49\columnwidth]
    {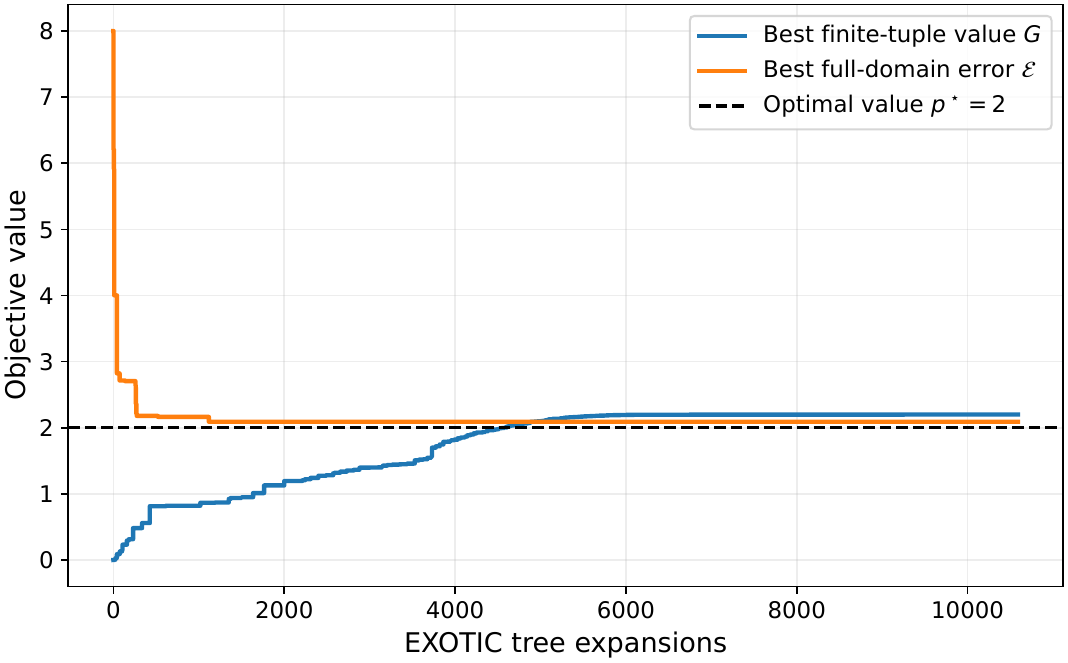}
    \hfill
    \includegraphics[width=0.48\columnwidth]
    {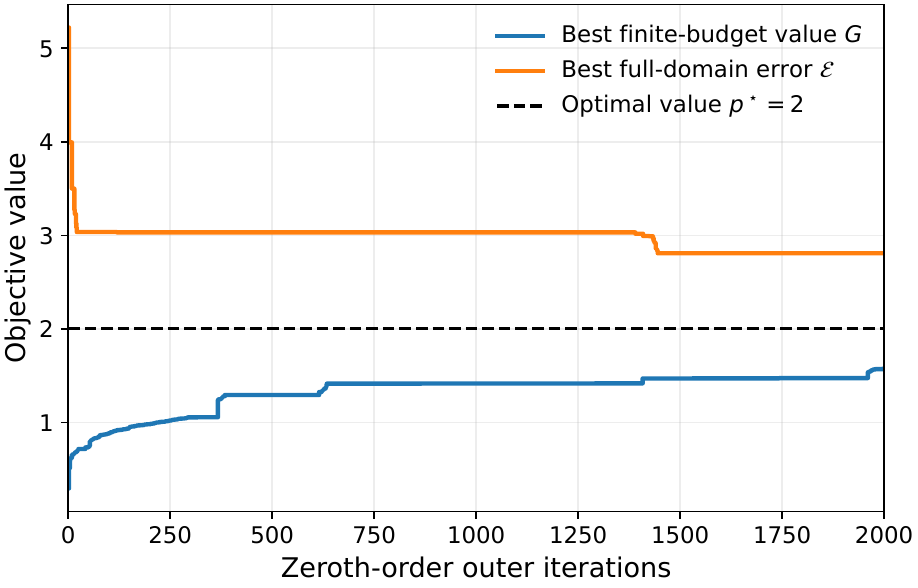}
    \caption{Representative best-so-far finite-tuple values and exact
    full-domain errors for the two-player game: DRPDM-based EXOTIC
    (\(h_{\max}=2000\), \(s=10^4\); left) and ZO
    (\(\alpha_0=0.1\), \(\delta_0=0.025\); right).
        The dashed lines mark \(p^\star=2\).}
    \label{fig:n2-best-so-far-comparison}
\end{figure}

Table~\ref{tab:n2-results} compares the two methods.
For EXOTIC, increasing \(h_{\max}\) consistently reduces the recovered
full-domain error. On this low-dimensional benchmark, EXOTIC achieves
more accurate objective and parameter recovery, at the cost of greater
computation at its highest-accuracy setting.
Figure~\ref{fig:n2-best-so-far-comparison} shows representative
trajectories for EXOTIC with \((h_{\max},s)=(2000,10^4)\) and ZO with
\((\alpha_0,\delta_0)=(0.1,0.025)\); the latter is shown for
illustration and is not the best-performing ZO configuration in
Table~\ref{tab:n2-results}.

\begin{table}[t]
\centering
\caption{Performance on the two-player continuous game. For ZO, the
configuration denotes $(\alpha_0,\delta_0)$; for EXOTIC, it denotes
$(h_{\max},s)$.}
\label{tab:n2-results}
\resizebox{\columnwidth}{!}{%
\begin{tabular}{llccccc}
\toprule
Method
& Configuration
& $\widehat G$
& $\mathcal{E}(\widehat\theta)$
& $\mathcal{E}(\widehat\theta)-p^\star$
& $\|\widehat\theta-\theta^\star\|_2$
& Time (s) \\
\midrule

ZO
& $(10^{-2},0.0125)$
& 0.3063
& \textbf{2.3436}
& \textbf{0.3436}
& \textbf{0.4650}
& 12.68 \\

& $(3\times10^{-2},0.0125)$
& 0.4363
& 2.4676
& 0.4676
& 1.0612
& 12.43 \\

& $(10^{-1},0.025)$
& \textbf{0.6550}
& 2.4251
& 0.4251
& 0.8502
& 12.30 \\

\midrule

EXOTIC
& $(500,10^4)$
& 1.6265
& 2.4512
& 0.4512
& 0.5217
& 6.98 \\

& $(1000,10^4)$
& 1.9006
& 2.1555
& 0.1555
& 0.1166
& 16.39 \\

& $(2000,10^4)$
& \textbf{1.9214}
& \textbf{2.0655}
& \textbf{0.0655}
& \textbf{0.1088}
& 38.37 \\

\bottomrule
\end{tabular}%
}
\end{table}

\subsection{Linear--Quadratic Network Games}
\label{subsec:lq-experiments}

We next consider network games for which an analytical
$\alpha$-potential construction is available from
\cite{rokade2025asymmetric}. Rather than recovering a known optimum,
these experiments test whether our optimization-based approach can
improve upon this analytical construction. We consider a small
instance amenable to EXOTIC and a larger instance for which we use
the ZO outer method. Consider the $N$-player linear--quadratic network game
\begin{equation}
\label{eq:lq-utility}
u_i(a)
=
-\frac12 a_i^2+\zeta_i a_i
+\gamma a_i\sum_{j=1}^N V_{ij}a_j,
\qquad
a_i\in[-1,1],
\end{equation}
where \(V_{ii}=0\) and the off-diagonal entries are sampled
independently as
\(V_{ij}\sim\operatorname{Unif}[1-\varepsilon_N,1+\varepsilon_N]\).
The players communicate over an undirected cycle.
We optimize over the quadratic potential class
\begin{equation}
\label{eq:lq-potential-class}
\phi_\theta(a)
=
\sum_{i=1}^N p_i a_i^2
+
\sum_{i=1}^N q_i a_i
+
\sum_{1\leq i<j\leq N} r_{ij}a_i a_j,
\end{equation}
where
\(
\theta
=
\bigl(
\{p_i\}_{i=1}^N,\,
\{q_i\}_{i=1}^N,\,
\{r_{ij}\}_{1\leq i<j\leq N}
\bigr)
\in\Theta\subseteq\mathbb R^d
\)
and
\(
d=2N+\frac{N(N-1)}{2}.
\)
As the analytical baseline, we use
\begin{equation}
\label{eq:lq-initialization}
p_i^{(0)}=-\frac12,
\qquad
q_i^{(0)}=\zeta_i,
\qquad
r_{ij}^{(0)}
=
\frac{\gamma}{2}
\bigl(V_{ij}+V_{ji}\bigr),
\quad i<j.
\end{equation}
By
\cite{rokade2025asymmetric}, its full-domain approximation error is
\(
\mathcal E(\theta_0)
=
|\gamma|\,\|V-V^\top\|_\infty.
\)
For this class of games, \(\mathcal E(\theta)\) can be evaluated
exactly over the continuous action domain using a closed-form
expression provided in Appendix~\ref{app:lq-full-domain-error}.
We measure improvement by
\begin{equation}
\label{eq:lq-improvement}
\operatorname{Imp}(\widehat\theta)
:=
\mathcal E(\theta_0)-\mathcal E(\widehat\theta),
\qquad
\operatorname{Imp}_{\%}(\widehat\theta)
:=
100\frac{\operatorname{Imp}(\widehat\theta)}
{\mathcal E(\theta_0)}.
\end{equation}
Positive values therefore indicate improvement over the analytical
construction in the original full-domain objective.
\subsubsection{Three-Player Game: Improving the Analytical Construction}
\label{subsubsec:lq-n3-exotic}
We first consider a low-dimensional instance for which EXOTIC can be
used for global search over the finite-tuple space. The goal is to test
whether optimizing the finite-tuple objective can improve upon the
analytical baseline in~\eqref{eq:lq-initialization}.
We consider a three-player instance with
\(\gamma=\zeta_i=1\), \(\varepsilon_N=0.5\), and communication over
the three-node cycle. We consider the following parameterization of potential
\begin{equation*}
\phi_\theta(a)
=
-\frac12\sum_{i=1}^3a_i^2
+\sum_{i=1}^3a_i
+\sum_{1\le i<j\le3}\theta_{ij}a_i a_j,
\quad
\theta\in[-2,2]^3.
\end{equation*}
Thus, \(d_\theta=3\), \(d_y=6\), and
\(\mathcal W=\mathcal Y^4\subseteq\mathbb R^{24}\).
The analytical baseline is
\(
\theta_{ij}^{(0)}=(V_{ij}+V_{ji})/2
\),
for which
\(
\mathcal E(\theta_0)=\|V-V^\top\|_\infty=0.4480.
\)
We use the EXOTIC parameter grid from
Section~\ref{subsec:experimental-protocol} and set \(\sigma=0.3\).

Across the tested configurations, the smallest observed exact
full-domain error is \(0.3693\), corresponding to a \(17.58\%\)
improvement over the analytical baseline.
Figure~\ref{fig:n3-exotic-best-so-far} shows that the best finite-tuple
value increases during the search, while the diagnostic full-domain
error initially decreases and then stabilizes. Thus, in this instance,
finite-tuple optimization identifies a potential that improves upon
the analytical construction over the full action domain.
\begin{figure}[t]
    \centering
    \includegraphics[width=0.5\columnwidth]
    {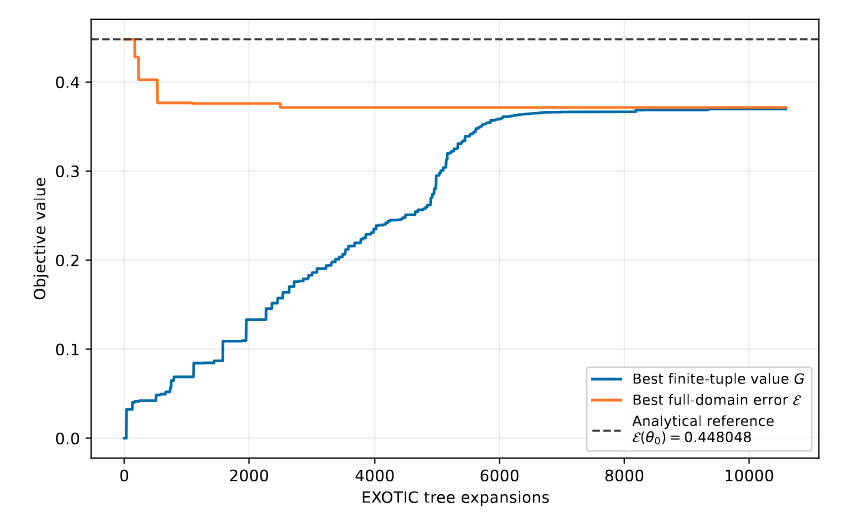}
    \caption{Representative EXOTIC search
    (\(h_{\max}=2000\), \(s=10^4\)). The dashed line marks
    \(\mathcal E(\theta_0)=\|V-V^\top\|_\infty\).}
    \label{fig:n3-exotic-best-so-far}
\end{figure}





\subsubsection{Fifty-Player Game: Scalable Zeroth-Order Search}
\label{subsubsec:lq-n50-zo}

We finally consider a larger instance to examine the scalability of
the ZO outer method. The full quadratic potential class has
\(d_\theta=1325\), making hierarchical global search over
\(\mathcal W=\mathcal Y^{d_\theta+1}\) computationally impractical.
We therefore combine the ZO outer search with the distributed DRPDM
oracle.

We consider a \(50\)-player instance with
\(\gamma=\zeta_i=1\) and \(\varepsilon_N=0.8\).
The interaction matrix \(V\) is generated with seed \(42\) and held
fixed throughout the experiment. The players communicate over a
\(50\)-node cycle, with \(\sigma=0.225\).
We use the full quadratic potential class in
\eqref{eq:lq-potential-class}, with
\(d_\theta=1325\) and \(\Theta=[-2,2]^{1325}\).
The analytical baseline in~\eqref{eq:lq-initialization} has exact
full-domain error
\(
\mathcal E(\theta_0)
=
\|V-V^\top\|_\infty
=
33.5072.
\)

We run Algorithm~\ref{alg:zeroth-order} for \(T=200\) iterations and
test
\(
\alpha_0\in\{10^{-2},10^{-3},10^{-4}\}
\)
and
\(
\delta_0\in\{0.00625,0.0125,0.025\}.
\)
Each base and perturbed query uses
\(K_{\mathrm{inner}}=10{,}000\) DRPDM iterations, with the two
queries evaluated in parallel from the same cold start \(\theta_0\).
All nine configurations share the same interaction matrix, initial
outer iterate, and zeroth-order directions.

Because the outer method optimizes the finite-tuple objective rather
than the full-domain error, we use the closed-form expression for
\(\mathcal E\) available in this example only for diagnostic
post-selection. Candidates whose initial full-domain error improves
upon \(\mathcal E(\theta_0)\) are independently re-solved using
\(K_{\mathrm{reeval}}=50{,}000\) DRPDM iterations, and we report the
candidate with the smallest resulting exact error. If no candidate
passes this screening step, we instead re-evaluate the candidate with
the smallest initial error.

Figure~\ref{fig:n50-zo-alpha-delta} shows the best-so-far trajectories,
and Table~\ref{tab:n50-zo-results} summarizes the post-selected
outputs. Four of the nine configurations improve upon the analytical
baseline. The best result is obtained with
\((\alpha_0,\delta_0)=(10^{-2},0.025)\), for which the iterate selected
at \(t=163\) achieves
\(
\mathcal E(\widehat\theta_{\mathrm{out}})=31.7768,
\)
corresponding to a \(5.16\%\) improvement. In contrast, all
configurations with \(\alpha_0=10^{-4}\) perform worse than the
baseline, suggesting that this outer stepsize is too small for the
available iteration budget.

The experiment also illustrates the distinction between outer-value recovery and parameter recovery. For an arbitrary nonoptimal tuple \(w\), increasing \(G(w)\) need not monotonically reduce \(\mathcal{E}(\theta_w)\), since \(\theta_w\) is optimized only against the deviations contained in \(w\). Theorem \ref{thm:exotic-global} therefore controls the outer value, while recovery of the corresponding full-domain optimal potential is governed asymptotically by the structural conditions of Proposition \ref{prop:reformulation-recovery}.
\begin{table}[h]
\centering
\caption{Post-selected outputs for the 50-player game with $c=1$,
after independent DRPDM re-evaluation with
$K_{\mathrm{reeval}}=50{,}000$.}
\label{tab:n50-zo-results}
\begin{tabular}{ccccc}
\toprule
$\alpha_0$ & $\delta_0$ & Selected $t$ &
$\mathcal{E}(\theta_{\mathrm{out}})$ & Improvement \\
\midrule
$10^{-2}$ & $0.0125$  & 140 & 32.730435 &
$\phantom{-}0.776793\;(2.32\%)$ \\
$\mathbf{10^{-2}}$ & $\mathbf{0.025}$ & \textbf{163} &
\textbf{31.776847} & $\mathbf{1.730381\;(5.16\%)}$ \\
\addlinespace
$10^{-3}$ & $0.0125$  & 67  & 33.584310 &
$-0.077082\;(-0.23\%)$ \\
$10^{-3}$ & $0.025$   & 160 & 32.786437 &
$\phantom{-}0.720791\;(2.15\%)$ \\
\bottomrule
\end{tabular}
\end{table}

\begin{figure}[h]
    \centering
    \includegraphics[
        width=0.48\textwidth
    ]{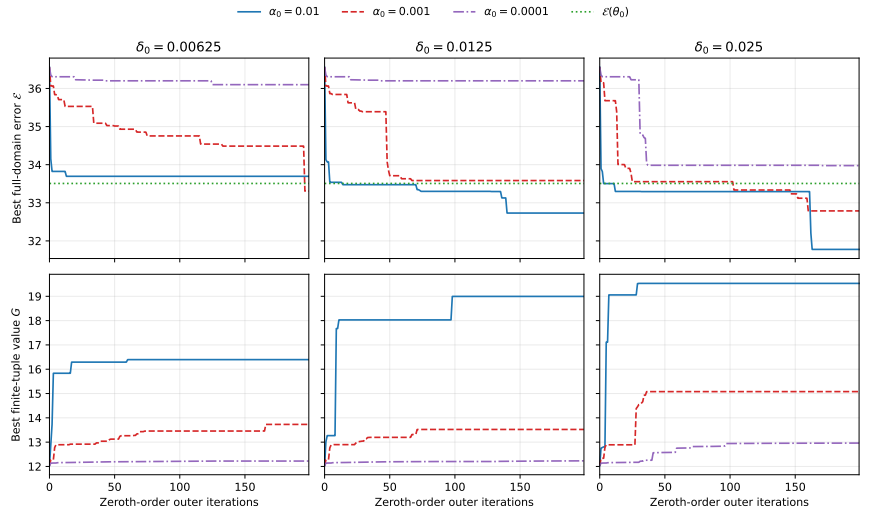}
    \caption{Best-so-far exact errors (top) and finite-tuple values
(bottom) for the \(50\)-player ZO experiment. Columns correspond to
\(\delta_0\in\{0.00625,0.0125,0.025\}\), and curves to
\(\alpha_0\in\{10^{-2},10^{-3},10^{-4}\}\). The dashed lines mark
\(\mathcal E(\theta_0)=33.5072\).}
    \label{fig:n50-zo-alpha-delta}
\end{figure}

\section{Conclusion}

We studied distributed computation of near-potential approximations
for continuous-action general-sum games. An exact finite-tuple
reformulation separates the problem into a global search over deviation
tuples and distributed convex inner problem. For the inner problem,
we developed a regularized primal--dual oracle with a uniform
one-sided accuracy guarantee. Combined with EXOTIC, this yields an
end-to-end global-search guarantee, while projected zeroth-order ascent
provides a more scalable alternative. We also established sufficient
conditions for recovering an optimizer of the original full-domain
problem. Numerical experiments illustrate the tradeoff between global
search accuracy and computational scalability.

\bibliographystyle{ieeetr}
\bibliography{sample}
\newpage 
\appendices

\section{Proof of Proposition~\ref{prop:reformulation-recovery}}
\label{app:reformulation-recovery-proof}

The value identity follows from the reformulation argument of~\cite[Proposition~2.1]{EXOTIC}. Indeed, \(\ell_i(\theta,y)\) is continuous in \((\theta,y)\) and affine, hence convex, in \(\theta\), while \(\Theta\) and \(\mathcal Y\) are compact. Compactness of \(\Theta\) replaces the coercivity assumption used in~\cite{EXOTIC} to guarantee boundedness and attainment. Therefore, \[ p^\star=\max_{w\in\mathcal W}G(w). \]

Let \(w^\star\in \operatorname*{arg\,max}_{w\in\mathcal W}G(w)\) and \(\theta_{w^\star}\in \operatorname*{arg\,min}_{\theta\in\Theta}F(\theta,w^\star)\). Then \[ F(\theta_{w^\star},w^\star) = G(w^\star) = p^\star. \]

For optimizer recovery, we verify the two alternative conditions
of~\cite[Proposition~2.4]{EXOTIC}.
Under condition~\emph{(i)}, \[ \mathcal E(\theta_{w^\star}) = F(\theta_{w^\star},w^\star) = p^\star, \] so \(\theta_{w^\star}\) solves the full-domain problem.

It remains to consider condition~\emph{(ii)}. For every \((i,j)\in S^\star\), define the active loss gradient \[ g_{ij}^\star:=\Delta_\psi(y_j^\star;i). \] 
Fix any
\(\theta\in\Theta\) with
\(\theta\ne\theta_{w^\star}^\star\), and let
\(
v:=\theta-\theta_{w^\star}^\star\ne0.
\)
For every \((i,j)\in\mathcal S^\star\), 
the affinity gives 
\[ 
\ell_i(\theta,y_j^\star) = \ell_i(\theta_{w^\star},y_j^\star) + \langle g_{ij}^\star,\theta-\theta_{w^\star}\rangle. 
\]  
Since every \((i,j)\in S^\star\) is active at \(\theta_{w^\star}\), 
\[ 
\begin{aligned} F(\theta,w^\star) &\geq \max_{(i,j)\in S^\star} \ell_i(\theta,y_j^\star)\\ 
&= F(\theta_{w^\star},w^\star) + \max_{(i,j)\in S^\star} \langle g_{ij}^\star,v\rangle. 
\end{aligned} 
\] 
We claim that 
\[ 
\max_{(i,j)\in S^\star} \langle g_{ij}^\star,v\rangle>0. 
\] 
Suppose instead that \(\langle g_{ij}^\star,v\rangle\leq 0\) 
for every \((i,j)\in S^\star\). 
Then the convex hull of the active gradients is contained in the closed halfspace 
\[ 
\{g\in\mathbb R^d:\langle g,v\rangle\leq 0\}. 
\] 
This contradicts condition~\emph{(ii)}, 
because 
\[ 
0\in \operatorname{int}\operatorname{conv} \{g_{ij}^\star:(i,j)\in S^\star\} 
\] 
implies that this convex hull contains a neighborhood of the origin, including a point in the direction \(v\) with strictly positive inner product with \(v\). Hence, 
\[ 
F(\theta,w^\star) > F(\theta_{w^\star},w^\star) \qquad \text{for every } \theta\in\Theta\setminus\{\theta_{w^\star}\}. 
\] 
Thus, \(\theta_{w^\star}\) is the unique minimizer of \(F(\cdot,w^\star)\). 

Finally, let
\(\theta^\star\in
\operatorname*{arg\,min}_{\theta\in\Theta}\mathcal E(\theta)\).
Since \(w^\star\) maximizes \(G\),
\[
p^\star
=
G(w^\star)
=
\min_{\theta\in\Theta}F(\theta,w^\star)
\leq
F(\theta^\star,w^\star).
\]
On the other hand, the tuple elements
\(y_1^\star,\ldots,y_{d+1}^\star\) belong to \(\mathcal Y\), so
\[
F(\theta^\star,w^\star)
=
\max_{i,j}\ell_i(\theta^\star,y_j^\star)
\leq
\max_{i,\;y\in\mathcal Y}\ell_i(\theta^\star,y)
=
\mathcal E(\theta^\star)
=
p^\star.
\]
Hence \(F(\theta^\star,w^\star)=G(w^\star)\), and therefore
\(\theta^\star\) also minimizes \(F(\cdot,w^\star)\).
Its uniqueness gives 
\(\theta^\star=\theta_{w^\star}\), and hence 
\[ 
\theta_{w^\star} \in \operatorname*{arg\,min}_{\theta\in\Theta}\mathcal E(\theta). 
\] 
This proves optimizer recovery under condition~\emph{(ii)}.

\section{Auxiliary Results}
\label{app:auxiliary-results}

\begin{lemma}
\label{lem:geometric-polynomial}
Let $\rho\in(0,1)$ and $\xi\in(0,1)$. Then, for every $k\ge1$,
\begin{equation}
\label{eq:geometric-polynomial}
\rho^k
\le
\left(
\frac{\xi}{e|\log\rho|}
\right)^\xi k^{-\xi},
\end{equation}
and
\begin{equation}
\label{eq:geometric-convolution}
\sum_{s=0}^{k-1}
\rho^{k-1-s}(s+1)^{-\xi}
\le
C_{\rho,\xi}k^{-\xi},
\end{equation}
where
\[
C_{\rho,\xi}
:=
\frac{2^\xi}{1-\rho}
+
\frac{2}{e(1-\xi)|\log\rho|}.
\]
\end{lemma}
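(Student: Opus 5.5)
For the first inequality \eqref{eq:geometric-polynomial}, I will write $\rho^k=e^{-ak}$ with $a:=|\log\rho|>0$ and bound $k^\xi e^{-ak}$ by its maximum over $k>0$. The function $x\mapsto x^\xi e^{-ax}$ is maximized at $x=\xi/a$, which gives
\[
k^\xi\rho^k\le\left(\frac{\xi}{a}\right)^\xi e^{-\xi}=\left(\frac{\xi}{ea}\right)^\xi .
\]
Dividing by $k^\xi$ yields \eqref{eq:geometric-polynomial}. The bound holds for every real $k>0$, and in particular for every $k\ge1$.

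For the convolution bound \eqref{eq:geometric-convolution}, I will split the sum at $s\approx k/2$. Let $\mathcal S_1$ collect the indices with $s+1\ge k/2$, and $\mathcal S_2$ those with $s+1<k/2$.

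On $\mathcal S_1$ we have $(s+1)^{-\xi}\le 2^\xi k^{-\xi}$. The remaining geometric factor satisfies $\sum_s\rho^{k-1-s}\le 1/(1-\rho)$. Together these give the term $\frac{2^\xi}{1-\rho}k^{-\xi}$.

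On $\mathcal S_2$ the exponent satisfies $k-1-s>k/2-1$, so the geometric factor is small. I will bound it using \eqref{eq:geometric-polynomial} with exponent $1$, or equivalently the bound $\rho^m\le 1/(e\,a\,m)$ for $m\ge1$, applied to $m=k-1-s\ge k/2$. The polynomial factor sums as
\[
\sum_{s+1<k/2}(s+1)^{-\xi}\le\frac{(k/2)^{1-\xi}}{1-\xi}.
\]
Multiplying this by $\rho^{k/2}\le\frac{2}{eak}$ gives at most
\[
\frac{2}{ea(1-\xi)}\,2^{\xi-1}k^{-\xi}\le\frac{2}{e(1-\xi)a}\,k^{-\xi},
\]
which matches the second term of $C_{\rho,\xi}$.

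The main obstacle is the bookkeeping at the split point. Two things need care here. First, I must check that $k-1-s\ge k/2$ on $\mathcal S_2$: for $s+1<k/2$ we get $k-1-s>k/2$, so this holds. Second, small $k$ (where $\mathcal S_2$ is empty) must be handled so the stated constant comes out exactly. If the constants do not align precisely, I will instead use the cruder bound $\rho^m\le 1/(eam)$ directly for each term in $\mathcal S_2$, since the factor $2^{\xi-1}\le1$ leaves slack.
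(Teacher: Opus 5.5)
Your proposal is correct and follows essentially the same argument as the paper: the same maximization of $x^\xi e^{-ax}$ for the first bound, and the same split at $s+1=k/2$, with the geometric-sum bound on the upper part and $\rho^{k/2}\le 2/(e|\log\rho|\,k)$ times an integral comparison on the lower part. Your partial-sum bound $(k/2)^{1-\xi}/(1-\xi)$ is slightly sharper than the paper's $k^{1-\xi}/(1-\xi)$, so the constants close with a spare factor $2^{\xi-1}\le 1$, and small $k$ needs no special handling because the empty sum contributes zero.
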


\begin{proof}
Let $b:=|\log\rho|>0$ and define
$f(x):=x^\xi e^{-bx}$ for $x>0$. Since
\[
f'(x)=x^{\xi-1}e^{-bx}(\xi-bx),
\]
$f$ is increasing on $(0,\xi/b)$ and decreasing on
$(\xi/b,\infty)$. Hence,
\[
\sup_{x>0}x^\xi\rho^x
=
f\left(\frac{\xi}{b}\right)
=
\left(
\frac{\xi}{e|\log\rho|}
\right)^\xi.
\]
Taking $x=k$ proves \eqref{eq:geometric-polynomial}.

To prove \eqref{eq:geometric-convolution}, first rewrite the
convolution as
\[
\sum_{s=0}^{k-1}\rho^{k-1-s}(s+1)^{-\xi}
=
\sum_{t=1}^{k}\rho^{k-t}t^{-\xi}.
\]
We split the index set into
\[
\mathcal I_1:=\{t\in[k]:t<k/2\},
\qquad
\mathcal I_2:=\{t\in[k]:t\ge k/2\}.
\]

For $t\in\mathcal I_2$, we have
$t^{-\xi}\le 2^\xi k^{-\xi}$. Therefore,
\begin{align*}
\sum_{t\in\mathcal I_2}\rho^{k-t}t^{-\xi}
&\le
2^\xi k^{-\xi}
\sum_{t\in\mathcal I_2}\rho^{k-t}\\
&\le
\frac{2^\xi}{1-\rho}k^{-\xi}.
\end{align*}

For $t\in\mathcal I_1$, we have $k-t>k/2$ and hence
$\rho^{k-t}\le\rho^{k/2}$. Since $\xi\in(0,1)$,
\[
\sum_{t\in\mathcal I_1}t^{-\xi}
\le
\sum_{t=1}^{k}t^{-\xi}
\le
\int_0^k x^{-\xi}\,dx
=
\frac{k^{1-\xi}}{1-\xi}.
\]
It follows that
\begin{align*}
\sum_{t\in\mathcal I_1}\rho^{k-t}t^{-\xi}
&\le
\frac{\rho^{k/2}k^{1-\xi}}{1-\xi}\\
&=
\frac{k\rho^{k/2}}{1-\xi}k^{-\xi}.
\end{align*}
To bound the remaining factor, define
$g(x):=x\rho^{x/2}$ for $x>0$. Since
\[
g'(x)
=
\rho^{x/2}
\left(
1-\frac{|\log\rho|}{2}x
\right),
\]
the maximum of $g$ is attained at
$x=2/|\log\rho|$. Consequently,
\[
k\rho^{k/2}
\le
\sup_{x>0}x\rho^{x/2}
=
\frac{2}{e|\log\rho|}.
\]
Thus,
\[
\sum_{t\in\mathcal I_1}\rho^{k-t}t^{-\xi}
\le
\frac{2}{e(1-\xi)|\log\rho|}k^{-\xi}.
\]

Combining the bounds over $\mathcal I_1$ and $\mathcal I_2$ gives
\begin{align*}
\sum_{s=0}^{k-1}\rho^{k-1-s}(s+1)^{-\xi}
&\le
\left[
\frac{2^\xi}{1-\rho}
+
\frac{2}{e(1-\xi)|\log\rho|}
\right]k^{-\xi}\\
&=
C_{\rho,\xi}k^{-\xi}.    
\end{align*}
This proves \eqref{eq:geometric-convolution}.
\end{proof}

\begin{lemma}
\label{lem:scaled-harmonic-sum}
For \(r>0\) and \(K\ge1\), define
\(
H_r(K):=\sum_{k=1}^{K}k^{-r}.
\)
Then,
\[
H_r(K)
\le
\begin{cases}
1+\dfrac{K^{1-r}-1}{1-r},
&0<r<1,\\[2mm]
1+\log K,
&r=1,\\[1mm]
1+\dfrac{1}{r-1},
&r>1.
\end{cases}
\]
Consequently, for every fixed \(r>0\), there exists a constant
\(C_r>0\), independent of \(K\), such that
\[
H_r(K)
\le
C_r(1+\log K)K^{[1-r]_+}.
\]
More generally, for every \(b\in\mathbb R\),
\[
K^{-b}H_r(K)
\le
C_r(1+\log K)
K^{-\min\{b,b+r-1\}}.
\]
\end{lemma}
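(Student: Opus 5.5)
The plan is to prove the three cases of the bound on \(H_r(K)\) by the standard integral comparison, and then to derive the two consequences by elementary case analysis in \(r\). Since \(x\mapsto x^{-r}\) is decreasing on \((0,\infty)\), each term with \(k\ge2\) satisfies \(k^{-r}\le\int_{k-1}^{k}x^{-r}\,dx\). Keeping the \(k=1\) term separately then gives
\[
H_r(K)\le 1+\int_1^K x^{-r}\,dx .
\]
I will evaluate this integral in each regime.
\begin{itemize}
\item For \(0<r<1\) it equals \((K^{1-r}-1)/(1-r)\).
\item For \(r=1\) it equals \(\log K\).
\item For \(r>1\) it equals \((1-K^{1-r})/(r-1)\le 1/(r-1)\).
\end{itemize}
This yields the three displayed cases directly; I expect no obstacle here.

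For the first consequence I will treat the three regimes separately, using \(K\ge1\).
\begin{itemize}
\item When \(0<r<1\), drop the \(-1\) to get \(H_r(K)\le 1+K^{1-r}/(1-r)\le \bigl(1+1/(1-r)\bigr)K^{1-r}\), since \(K^{1-r}\ge1\). Because \(1+\log K\ge1\), I take \(C_r=1+1/(1-r)\).
\item When \(r=1\), the bound is exactly \((1+\log K)K^0\) with \(C_1=1\).
\item When \(r>1\), \([1-r]_+=0\) and \(H_r(K)\le 1+1/(r-1)\le C_r(1+\log K)\) with \(C_r=1+1/(r-1)\).
\end{itemize}
In every case \(H_r(K)\le C_r(1+\log K)K^{[1-r]_+}\).

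For the final claim I multiply the first consequence by \(K^{-b}\). This gives
\[
K^{-b}H_r(K)\le C_r(1+\log K)K^{-b+[1-r]_+}.
\]
The exponent is \(-b+\max\{1-r,0\}=-\min\{b+r-1,\,b\}\), which is exactly the stated exponent.

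The only point needing care is that the constant \(C_r\) must be uniform in \(K\) for fixed \(r\), and the choices above satisfy this. The inclusion of the logarithmic factor in all cases is harmless because \(1+\log K\ge1\). So I expect no genuine obstacle; the only "hardest" step is bookkeeping the \(r<1\) case, where the \(-1\) is discarded and \(1\le K^{1-r}\) is used to absorb the additive constant.
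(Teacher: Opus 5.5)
Your proposal is correct and follows the same route as the paper. Like the paper, you use the integral comparison $H_r(K)\le 1+\int_1^K x^{-r}\,dx$ in the three regimes of $r$, then multiply the unified bound by $K^{-b}$ and rewrite the exponent as $-b+[1-r]_+=-\min\{b,b+r-1\}$; your explicit constants $C_r$ just fill in the step the paper states without detail.
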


\begin{proof}
If \(0<r<1\), the integral comparison gives
\[
H_r(K)
\le
1+\int_1^K x^{-r}\,dx
=
1+\frac{K^{1-r}-1}{1-r}.
\]
If \(r=1\), then
\[
H_1(K)
\le
1+\int_1^K\frac{1}{x}\,dx
=
1+\log K.
\]
Finally, if \(r>1\),
\[
H_r(K)
\le
1+\int_1^\infty x^{-r}\,dx
=
1+\frac{1}{r-1}.
\]
These three estimates imply
\[
H_r(K)
\le
C_r(1+\log K)K^{[1-r]_+}
\]
for some constant \(C_r>0\).

Multiplying both sides by \(K^{-b}\) gives
\[
K^{-b}H_r(K)
\le
C_r(1+\log K)
K^{-b+[1-r]_+}.
\]
Since
\[
b-[1-r]_+
=
\begin{cases}
b+r-1,&0<r<1,\\
b,&r\ge1,
\end{cases}
=
\min\{b,b+r-1\},
\]
we obtain
\[
K^{-b}H_r(K)
\le
C_r(1+\log K)
K^{-\min\{b,b+r-1\}}.
\]
\end{proof}

\section{Complete Proofs for Section~\ref{sec:distributed-optimization}}
\label{app:section-iv-proofs}

This appendix supplies the complete proofs of the results used in the
finite-time analysis of the distributed oracle and in the outer-search
guarantee. Throughout, constants may depend on the fixed game, compact
sets, communication graph, and algorithm parameters, but not on the
queried tuple~$w$, the iteration budget~$K$, or the initial primal
points in~$\mathcal Z$.

\subsection{Proof of Proposition~\ref{prop:distributed-equivalence}}

Given any feasible point \((\theta,t)\) of
\eqref{eq:tuple-inner-epigraph}, for \( i\in\mathcal I,\) set
\(
z_i=(\theta,t).
\)
The resulting collection satisfies all the local and consensus
constraints of the consensus formulation, and its objective value is
\(
(N)^{-1}\sum_{i=1}^N t_i=t.
\)
Conversely, consider any feasible collection
\(\{z_i\}_{i\in\mathcal I}\) of the consensus formulation.
Because \(\mathcal G\) is connected, the consensus constraints imply
that
\(
z_1=\cdots=z_N=(\theta,t)
\)
for some \((\theta,t)\). The local constraints then imply
\(h_i^w(\theta)\leq t\) for every \(i\in\mathcal I\).
Therefore, \((\theta,t)\) is feasible for
\eqref{eq:tuple-inner-epigraph}, and the two corresponding points
have the same objective value.

\subsection{Proof of Lemma~\ref{lem:dual-iterate-bound}}
\label{app:proof-dual-iterates}

Let
\[
 C_\theta:=\max_{\theta\in\Theta}\|\theta\|_2,
 \qquad
 C_t:=\max\{|t_{\min}|,|t_{\max}|\},
\]
and recall the uniform bounds
\[
 G_\psi:=\max_{i\in\mathcal I,\,y\in\mathcal Y}
 \|\Delta\psi(y;i)\|_2,
\]
and
\[
 U:=\max_{i\in\mathcal I,\,y\in\mathcal Y}
 |\Delta u_i(y;i)|.
\]
These constants are finite by compactness and continuity. Since
$|\ell_i(\theta,y)|\le C_\theta G_\psi+U$, compactness of
$\mathcal T=[t_{\min},t_{\max}]$ gives
\[
 \bigl|h_i^w(\theta_i[k])-t_i[k]\bigr|
 \le B_\mu,
 \qquad
 B_\mu:=C_\theta G_\psi+U+C_t,
\]
uniformly over $w$, $i$, and $k$.

The dual update and $\eta_k\beta_k\le c\beta_0\le1$ yield
\begin{align*}
\mu_i[k+1]
&=\Bigl[(1-\eta_k\beta_k)\mu_i[k]+\eta_k\bigl(h_i^w(\theta_i[k])-t_i[k]\bigr)\Bigr]_+\\
&\le (1-\eta_k\beta_k)\mu_i[k]+\eta_kB_\mu.
\end{align*}
We prove by induction that $\mu_i[k]\le B_\mu/\beta_k$.
The claim holds at $k=0$ because $\mu_i[0]=0$. If it holds at
iteration $k$, then
\begin{align*}
\mu_i[k+1]
&\le (1-\eta_k\beta_k)\frac{B_\mu}{\beta_k}
     +\eta_kB_\mu\\
&=\frac{B_\mu}{\beta_k}
 \le\frac{B_\mu}{\beta_{k+1}},
\end{align*}
where the last inequality follows from $\beta_{k+1}\le\beta_k$.
Therefore,
\[
 0\le\mu_i[k]\le\frac{B_\mu}{\beta_k}
 =\frac{B_\mu}{\beta_0}(k+1)^q
\]
for all $i$, $k$, and $w$.

For later use, Algorithm~\ref{alg:drpdm} and $\|s_i[k]\|_2\le G_\psi$ also give
\begin{align}
 \|g_i[k]\|_2
 &\le \mu_i[k](G_\psi+1)+\frac1N
 \le C_g(k+1)^q,                                      \label{eq:app-g-bound}\\
 |r_i[k]|
 &\le B_\mu+\beta_k\mu_i[k]
 \le 2B_\mu,                                          \label{eq:app-r-bound}
\end{align}
where
\[
 C_g:=\frac{B_\mu}{\beta_0}(G_\psi+1)+\frac1N.
\]
All these bounds are uniform over $w$.

\subsection{Proof of Lemma~\ref{lem:consensus-error}}
\label{app:proof-consensus}

Let
\[
 P:=I_N-\sigma L_{\mathcal G},
 \qquad
 J:=\frac1N\mathbf1\mathbf1^\top.
\]
Because the graph is undirected,
$L_{\mathcal G}\mathbf1=0$ and
$\mathbf1^\top L_{\mathcal G}=0^\top$, so
\[
 P\mathbf1=\mathbf1,
 \qquad
 \mathbf1^\top P=\mathbf1^\top.
\]
Moreover,
$d_i=e_i^\top L_{\mathcal G}e_i\le\lambda_{\max}(L_{\mathcal G})$.
The condition $\sigma\le1/\lambda_{\max}(L_{\mathcal G})$ therefore
implies
\[
 P_{ii}=1-\sigma d_i\ge0,
 \qquad
 P_{ij}=\sigma[A_{\mathcal G}]_{ij}\ge0
 \quad(i\ne j).
\]
Hence, $P$ is nonnegative and doubly stochastic.

Let the eigenvalues of $L_{\mathcal G}$ be ordered as
\[
0=\lambda_1(L_{\mathcal G})
 <\lambda_2(L_{\mathcal G})
 \le\cdots\le\lambda_N(L_{\mathcal G}).
\]
Connectivity gives $\lambda_2(L_{\mathcal G})>0$. Since
$0<\sigma\le1/\lambda_{\max}(L_{\mathcal G})$, spectral decomposition
gives
\begin{equation}
 \rho_{\mathcal G}:=\|P-J\|_2
 =1-\sigma\lambda_2(L_{\mathcal G})<1.                 \label{eq:app-rho}
\end{equation}

Let $Z[k]$ and $\mathsf G[k]$ have rows $z_i[k]^\top$ and
$g_i[k]^\top$, respectively. Since
\[
 \sum_{j\in\mathcal N_i}[A_{\mathcal G}]_{ij}
 (z_j[k]-z_i[k])
 =-(L_{\mathcal G}Z[k])_i^\top,
\]
the stacked unprojected primal update is
\[
 PZ[k]-\eta_k\mathsf G[k].
\]
Also,
\[
 e_k=\frac1{\sqrt N}\|(I_N-J)Z[k]\|_F.
\]
The pairwise-variance identity and nonexpansiveness of the common
row-wise projection onto $\mathcal Z$ give
\begin{align*}
e_{k+1}
&\le\frac1{\sqrt N}
 \bigl\|(I_N-J)(PZ[k]-\eta_k\mathsf G[k])\bigr\|_F\\
&\le\frac1{\sqrt N}\|P-J\|_2\|(I_N-J)Z[k]\|_F\\
&\quad+\frac{\eta_k}{\sqrt N}
 \|I_N-J\|_2\|\mathsf G[k]\|_F\\
&\le\rho_{\mathcal G}e_k+\eta_kC_g(k+1)^q,
\end{align*}
where the last inequality uses~\eqref{eq:app-g-bound}.
Substituting $\eta_k=c(k+1)^{-\tau}$ yields
\begin{equation}
 e_{k+1}\le\rho_{\mathcal G}e_k
 +cC_g(k+1)^{-(\tau-q)}.                              \label{eq:app-cons-recursion}
\end{equation}

Put $\xi:=\tau-q>0$. Iterating~\eqref{eq:app-cons-recursion} gives,
for every $k\ge1$,
\[
 e_k\le\rho_{\mathcal G}^ke_0
 +cC_g\sum_{s=0}^{k-1}
 \rho_{\mathcal G}^{k-1-s}(s+1)^{-\xi}.
\]
Let
\[
 D_{\mathcal Z}:=\max_{z,z'\in\mathcal Z}\|z-z'\|_2.
\]
Convexity of $\mathcal Z$ implies $z_{\rm av}[0]\in\mathcal Z$ and
hence $e_0\le D_{\mathcal Z}$. If $0<\rho_{\mathcal G}<1$,
Lemma~14 gives
\[
 \rho_{\mathcal G}^k\le C_{\rho,1}k^{-\xi},
 \qquad
 \sum_{s=0}^{k-1}\rho_{\mathcal G}^{k-1-s}(s+1)^{-\xi}
 \le C_{\rho,2}k^{-\xi}.
\]
Thus, for a constant $C_e>0$ independent of $w$ and $k$,
\[
 e_k\le C_ek^{q-\tau}.
\]
When $\rho_{\mathcal G}=0$, the same conclusion follows directly
from~\eqref{eq:app-cons-recursion}. Enlarging $C_e$ to cover $k=0$
gives
\begin{equation}
 e_k\le C_e(k+1)^{q-\tau},
 \qquad k\ge0.                                        \label{eq:app-point-consensus}
\end{equation}

We next pass to the weighted ergodic averages. Since
\[
 \bar z_w[K]=\frac1{S_K}\sum_{k=0}^{K-1}\eta_kz_{\rm av}[k],
\]
for every player $i$,
\[
 \bar z_i[K]-\bar z_w[K]
 =\frac1{S_K}\sum_{k=0}^{K-1}\eta_k
 \bigl(z_i[k]-z_{\rm av}[k]\bigr).
\]
The definition of $e_k$ implies
$\|z_i[k]-z_{\rm av}[k]\|_2\le\sqrt N e_k$, and hence
\begin{equation}
 \max_i\|\bar z_i[K]-\bar z_w[K]\|_2
 \le\frac{\sqrt N}{S_K}\sum_{k=0}^{K-1}\eta_ke_k
 =D_K.                                                  \label{eq:app-ergodic-consensus}
\end{equation}
Furthermore,~\eqref{eq:app-point-consensus} gives
\begin{align*}
\sum_{k=0}^{K-1}\eta_ke_k
&\le cD_{\mathcal Z}
 +cC_e\sum_{k=1}^{K}k^{-(2\tau-q)}\\
&\le c\bigl[D_{\mathcal Z}+C_eH_{2\tau-q}(K)\bigr].
\end{align*}
Since
$S_K=c\sum_{k=1}^{K}k^{-\tau}\ge cK^{1-\tau}$,
\[
 D_K\le\sqrt N K^{\tau-1}
 \bigl[D_{\mathcal Z}+C_eH_{2\tau-q}(K)\bigr].
\]
Applying Lemma~\ref{lem:scaled-harmonic-sum} in Appendix~\ref{app:auxiliary-results} with $b=1-\tau$ and $r=2\tau-q$ yields
\[
 D_K\le C_D(1+\log K)
 K^{-\min\{\tau-q,1-\tau\}},
\]
which proves both claims.

\subsection{Proof of Lemma~\ref{lem:ergodic-primal-dual}}
\label{app:proof-ergodic-pd}

For compactness, write
\[
 c_i[k]:=h_i^w(\theta_i[k])-t_i[k]
\]
and define the local unregularized Lagrangian
\[
 \mathcal L_i(z_i,\mu_i)
 :=\frac{t_i}{N}+\mu_i\bigl(h_i^w(\theta_i)-t_i\bigr).
\]
Thus,
\[
 \mathcal L(Z,\mu):=\sum_{i=1}^N\mathcal L_i(z_i,\mu_i).
\]
Fix a consensus comparator $Z_c=\mathbf1z^\top$, with
$z\in\mathcal Z$, and any $\mu\in\mathbb R_+^N$. Let
\[
 v_i[k]:=\sum_{j=1}^NP_{ij}z_j[k].
\]
Then the primal update is
\[
 z_i[k+1]=P_{\mathcal Z}(v_i[k]-\eta_kg_i[k]).
\]
Because $z=P_{\mathcal Z}(z)$, nonexpansiveness gives
\begin{multline*}
\|z_i[k+1]-z\|_2^2
\le\|v_i[k]-\eta_kg_i[k]-z\|_2^2\\
=\|v_i[k]-z\|_2^2
 -2\eta_k\langle g_i[k],z_i[k]-z\rangle\\
+2\eta_k\langle g_i[k],z_i[k]-v_i[k]\rangle
 +\eta_k^2\|g_i[k]\|_2^2.
\end{multline*}
Since $P$ is doubly stochastic, Jensen's inequality yields
\[
 \sum_i\|v_i[k]-z\|_2^2
 \le\sum_i\|z_i[k]-z\|_2^2.
\]
Moreover, $\mu_i[k]\ge0$ makes
$\mathcal L_i(\cdot,\mu_i[k])$ convex, and
$g_i[k]\in\partial_z\mathcal L_i(z_i[k],\mu_i[k])$. Therefore,
\[
 \langle g_i[k],z_i[k]-z\rangle
 \ge\mathcal L_i(z_i[k],\mu_i[k])
 -\mathcal L_i(z,\mu_i[k]).
\]
Because $P=I_N-\sigma L_{\mathcal G}$,
\[
 \sum_i\langle g_i[k],z_i[k]-v_i[k]\rangle
 =\sigma\langle\mathsf G[k],L_{\mathcal G}Z[k]\rangle_F.
\]
Summing the preceding projection inequality over $i$ gives
\begin{align}
&\eta_k\sum_i\bigl[
 \mathcal L_i(z_i[k],\mu_i[k])
 -\mathcal L_i(z,\mu_i[k])\bigr]\notag\\
&\le\frac12\sum_i\bigl(
 \|z_i[k]-z\|_2^2-\|z_i[k+1]-z\|_2^2\bigr)\notag\\
&\quad+\eta_k\sigma
 \langle\mathsf G[k],L_{\mathcal G}Z[k]\rangle_F
 +\frac{\eta_k^2}{2}\sum_i\|g_i[k]\|_2^2.             \label{eq:app-primal-one-step}
\end{align}

For the dual update, projection onto $\mathbb R_+$ gives, for every
comparator $\mu_i\ge0$,
\begin{align*}
|\mu_i[k+1]-\mu_i|^2
&\le\bigl|\mu_i[k]-\mu_i+\eta_kr_i[k]\bigr|^2\\
&=|\mu_i[k]-\mu_i|^2
 +2\eta_k(\mu_i[k]-\mu_i)c_i[k]\\
&\quad-2\eta_k\beta_k\mu_i[k](\mu_i[k]-\mu_i)\\
&\quad+\eta_k^2|r_i[k]|^2.
\end{align*}
Completing the square gives
\[
 (\mu_i-\mu_i[k])\mu_i[k]
 =\frac{\mu_i^2}{4}
 -\left(\mu_i[k]-\frac{\mu_i}{2}\right)^2
 \le\frac{\mu_i^2}{4}.
\]
Since $\mathcal L_i(z_i[k],\cdot)$ is affine,
\[
 \mathcal L_i(z_i[k],\mu_i)
 -\mathcal L_i(z_i[k],\mu_i[k])
 =(\mu_i-\mu_i[k])c_i[k].
\]
Rearranging the dual projection inequality therefore gives
\begin{align}
&\eta_k\bigl[
 \mathcal L_i(z_i[k],\mu_i)
 -\mathcal L_i(z_i[k],\mu_i[k])\bigr]\notag\\
&\le\frac12\bigl(
 |\mu_i[k]-\mu_i|^2-|\mu_i[k+1]-\mu_i|^2\bigr)\notag\\
&\quad+\frac{\eta_k^2}{2}|r_i[k]|^2
 +\frac{\eta_k\beta_k}{4}\mu_i^2.                    \label{eq:app-dual-one-step}
\end{align}

Let $\bar Z[K]$ have rows $\bar z_i[K]^\top$ and define
\[
 \bar\mu[K]:=\frac1{S_K}\sum_{k=0}^{K-1}\eta_k\mu[k].
\]
Convexity of $\mathcal L(\cdot,\mu)$ and affinity of
$\mathcal L(Z_c,\cdot)$ imply
\begin{multline*}
S_K\bigl[
 \mathcal L(\bar Z[K],\mu)-\mathcal L(Z_c,\bar\mu[K])
 \bigr]\\
\le\sum_{k=0}^{K-1}\eta_k
 \bigl[
 \mathcal L(Z[k],\mu)-\mathcal L(Z_c,\mu[k])
 \bigr].
\end{multline*}
Adding~\eqref{eq:app-dual-one-step}, summed over $i$, to
\eqref{eq:app-primal-one-step}, and then summing over $k$, makes all
distance terms telescope. Since $z_i[0],z\in\mathcal Z$,
$\mu_i[0]=0$, and the dropped terminal squared distances are
nonnegative, we obtain
\begin{align}
&S_K\bigl[
 \mathcal L(\bar Z[K],\mu)-\mathcal L(Z_c,\bar\mu[K])
 \bigr]\notag\\
&\le\frac{ND_{\mathcal Z}^2}{2}
 +\sigma\sum_{k=0}^{K-1}\eta_k
 \bigl|\langle\mathsf G[k],L_{\mathcal G}Z[k]\rangle_F\bigr|\notag\\
&\quad+\frac12\sum_{k=0}^{K-1}\eta_k^2
 \sum_i\bigl(\|g_i[k]\|_2^2+|r_i[k]|^2\bigr)\notag\\
&\quad+\left(
 \frac12+\frac14\sum_{k=0}^{K-1}\eta_k\beta_k
 \right)\|\mu\|_2^2.                                  \label{eq:app-telescoped}
\end{align}

It remains to control the mixing and squared-direction terms. Since
$L_{\mathcal G}J=0$, Cauchy--Schwarz,
\eqref{eq:app-g-bound}, and the definition of $e_k$ yield
\begin{align*}
\bigl|\langle\mathsf G[k],L_{\mathcal G}Z[k]\rangle_F\bigr|
&\le\|\mathsf G[k]\|_F\|L_{\mathcal G}\|_2
 \|(I_N-J)Z[k]\|_F\\
&\le N\|L_{\mathcal G}\|_2C_g(k+1)^qe_k.
\end{align*}
Furthermore,~\eqref{eq:app-g-bound}--\eqref{eq:app-r-bound} imply
\[
 \sum_i\bigl(\|g_i[k]\|_2^2+|r_i[k]|^2\bigr)
 \le C_1(k+1)^{2q}
\]
for a uniform constant $C_1>0$. Substituting these two bounds into
\eqref{eq:app-telescoped}, absorbing all fixed constants into
$C_0>0$, and dividing by $S_K$ gives
\[
 \mathcal L(\bar Z[K],\mu)-\mathcal L(Z_c,\bar\mu[K])
 \le R_K+Q_K\|\mu\|_2^2,
\]
where $R_K$ and $Q_K$ are exactly those defined in Lemma~\ref{lem:ergodic-primal-dual}. The
constant $C_0$ is independent of $w$, $K$, and the initial primal
points because $D_{\mathcal Z}$ and all constants above are uniform.

\subsection{Proof of Theorem~\ref{thm:drpdm-convergence}}
\label{app:proof-inner-oracle}

Fix $w\in\mathcal W$ and choose
\[
\begin{aligned}
 \theta_w^\star&\in\arg\min_{\theta\in\Theta}F(\theta,w),\\
 z_w^\star&:=(\theta_w^\star,G(w)),\\
 Z_c&:=\mathbf1(z_w^\star)^\top.
\end{aligned}
\]
The point $z_w^\star$ is feasible for the epigraph problem, so
\begin{equation}
 \mathcal L(Z_c,\bar\mu[K])\le G(w).                    \label{eq:app-comparator-feasible}
\end{equation}
Let $i_K$ attain
\[
 F(\bar\theta_w[K],w)
 =h_{i_K}^w(\bar\theta_w[K]),
\]
let $e_{i_K}$ be the corresponding coordinate unit vector, and set
$\bar t_w[K]:=N^{-1}\sum_i\bar t_i[K]$. Then
\[
 \mathcal L(\bar Z[K],e_{i_K})
 =\bar t_w[K]+h_{i_K}^w(\bar\theta_{i_K}[K])
 -\bar t_{i_K}[K].
\]
Since $h_i^w$ is $G_\psi$-Lipschitz and
$\|\bar z_{i_K}[K]-\bar z_w[K]\|_2\le D_K$,
\begin{align*}
F(\bar\theta_w[K],w)
&\le h_{i_K}^w(\bar\theta_{i_K}[K])+G_\psi D_K\\
&\le\mathcal L(\bar Z[K],e_{i_K})+(G_\psi+1)D_K.
\end{align*}
Applying Lemma~\ref{lem:ergodic-primal-dual} with $\mu=e_{i_K}$, using
$\|e_{i_K}\|_2=1$, and invoking
\eqref{eq:app-comparator-feasible} gives
\begin{equation}
0\le\widehat G_K(w)-G(w)
\le R_K+Q_K+(G_\psi+1)D_K.                             \label{eq:app-oracle-decomp}
\end{equation}
The lower bound follows directly from
$G(w)=\min_{\theta\in\Theta}F(\theta,w)$.

It remains to bound the three terms uniformly. Let
$H_r(K):=\sum_{k=1}^Kk^{-r}$. From the definitions of $R_K$ and
$Q_K$, the pointwise consensus estimate
\eqref{eq:app-point-consensus}, and $S_K\ge cK^{1-\tau}$,
the right-hand side of~\eqref{eq:app-oracle-decomp} is bounded by a
uniform constant multiple of
\[
K^{\tau-1}\bigl[
1+H_{2\tau-2q}(K)+H_{\tau+q}(K)+H_{2\tau-q}(K)
\bigr].
\]
Indeed,
\begin{align*}
\sum_{k=0}^{K-1}\eta_k^2(k+1)^{2q}
&=c^2H_{2\tau-2q}(K),\\
\sum_{k=0}^{K-1}\eta_k(k+1)^qe_k
&\le cC_eH_{2\tau-2q}(K),\\
\sum_{k=0}^{K-1}\eta_k\beta_k
&=c\beta_0H_{\tau+q}(K),
\end{align*}
while Lemma~\ref{lem:consensus-error} supplies the $H_{2\tau-q}(K)$ term through $D_K$.

Applying Lemma~\ref{lem:scaled-harmonic-sum} gives the respective decay exponents
\[
\begin{aligned}
&1-\tau, &&\min\{1-\tau,\tau-2q\},\\
&\min\{1-\tau,q\},
&&\min\{1-\tau,\tau-q\},
\end{aligned}
\]
up to a factor $1+\log K$. Their minimum is
\[
 \gamma_c(\tau,q)=\min\{q,\tau-2q,1-\tau\}>0.
\]
Therefore, for a constant $C>0$ independent of $w$, $K$, and the
initial primal points,
\[
0\le\widehat G_K(w)-G(w)
\le C(1+\log K)K^{-\gamma_c(\tau,q)}.
\]
Setting \(\tau=3/4\) and \(q=1/4\) gives
\[
\gamma_c(\tau,q)
=
\min\{q,\tau-2q,1-\tau\}
=
\frac14.
\]
Therefore,
\[
0\leq \widehat G_K(w)-G(w)
\leq
C(1+\log K)K^{-1/4}
=
\widetilde{\mathcal O}(K^{-1/4})
\]
uniformly over \(w\in\mathcal W\), which proves
Theorem~\ref{thm:drpdm-convergence}.

\subsection{Proof of Lemma~\ref{lem:outer-lipschitz}}
\label{app:proof-outer-regularity}

Let $w=(y_1,\ldots,y_{d+1})$ and
$\widetilde w=(\widetilde y_1,\ldots,\widetilde y_{d+1})$.
For any $i$, $\theta$, and $y,\widetilde y\in\mathcal Y$, the
definitions of $\ell_i$, $C_\theta$, $L_\psi$, and $L_u$ give
\begin{align*}
|\ell_i(\theta,y)-\ell_i(\theta,\widetilde y)|
&\le |\langle\theta,
 \Delta\psi(y;i)-\Delta\psi(\widetilde y;i)\rangle|\\
&\quad+|\Delta u_i(y;i)-\Delta u_i(\widetilde y;i)|\\
&\le2(C_\theta L_\psi+L_u)\|y-\widetilde y\|_2\\
&=L_{\rm game}\|y-\widetilde y\|_2.
\end{align*}
Taking finite maxima over players and tuple components preserves
this common Lipschitz bound. Thus, for every $\theta\in\Theta$,
\[
 |F(\theta,w)-F(\theta,\widetilde w)|
 \le L_{\rm game}d_{\mathcal W}(w,\widetilde w).
\]
Let $\theta_{\widetilde w}$ minimize $F(\cdot,\widetilde w)$. Then
\begin{align*}
G(w)-G(\widetilde w)
&\le F(\theta_{\widetilde w},w)
 -F(\theta_{\widetilde w},\widetilde w)\\
&\le L_{\rm game}d_{\mathcal W}(w,\widetilde w).
\end{align*}
Interchanging $w$ and $\widetilde w$ proves
\[
 |G(w)-G(\widetilde w)|
 \le L_{\rm game}d_{\mathcal W}(w,\widetilde w).
\]
If $w^\star$ and $w$ lie in the same cell $\mathcal W_{h,r}$,
the partition-diameter condition consequently gives
\[
G(w^\star)-G(w)
\le L_{\rm game}\alpha_{\mathcal W}\rho^h
=\nu_{\rm game}\rho^h.
\]

\subsection{Proof of Theorem~\ref{thm:exotic-global}}
\label{app:proof-global-outer}

Theorem~\ref{thm:drpdm-convergence} gives the uniform one-sided
oracle bound for all \(w\in\mathcal W,\, K\geq1,\)
\begin{equation}
0
\leq
\widehat G_K(w)-G(w)
\leq
C(1+\log K)K^{-1/4},
\label{eq:app-oracle-rate}
\end{equation}
for some constant \(C>0\) independent of \(w\) and \(K\).

Lemma~\ref{lem:outer-lipschitz} and the geometric
partition-diameter condition in~\eqref{eq:partition-diameter}
imply that every cell \(\mathcal W_{h,r}\) containing an optimizer
\(w^\star\) satisfies
\[
\sup_{w\in\mathcal W_{h,r}}
\bigl(G(w^\star)-G(w)\bigr)
\leq
\nu_{\mathrm{game}}\rho^h.
\]
Moreover, \eqref{eq:near-optimality-dimension} supplies the admissible
near-optimality exponent \(d_G\).

To make the oracle error in~\eqref{eq:app-oracle-rate} comparable
to the depth-\(h\) resolution \(\rho^h\), an inner budget of order
\[
K_h
=
\widetilde{\mathcal O}(\rho^{-4h})
\]
is sufficient. At the same depth, the number of near-optimal cells
is at most of order \(\rho^{-d_Gh}\). Consequently, the total inner
work required by the EXOTIC search to resolve depth \(h\) is
\[
\widetilde{\mathcal O}
\left(
\rho^{-(d_G+4)h}
\right).
\]
Equivalently, under total inner-iteration budget \(n\), the finest
achievable resolution satisfies
\[
\rho^h
=
\widetilde{\mathcal O}
\left(
n^{-1/(d_G+4)}
\right).
\]
This is the budget argument in
\cite[Theorem~3.3 and Corollary~3.5]{EXOTIC}, with the logarithmic
oracle factor absorbed into \(\widetilde{\mathcal O}\). Hence, for
the refined estimate returned by EXOTIC,
\begin{equation}
p^\star-\widehat G_n
=
\widetilde{\mathcal O}
\left(
n^{-1/(d_G+4)}
\right),
\label{eq:app-exotic-estimate}
\end{equation}
where
\(
\widehat G_n
:=
\widehat G_{K_{\mathrm{final}}}(\widehat w_n).
\)

Under the EXOTIC budget rule,
\[
h_{\max}
=
\left\lfloor
\frac{2n}{5B(1+\log n)^2}
\right\rfloor,
\qquad
K_{\mathrm{final}}
=
\lfloor h_{\max}/2\rfloor
=
\widetilde{\Theta}(n).
\]
Therefore, the final re-evaluation satisfies
\[
0
\leq
\widehat G_n-G(\widehat w_n)
=
\widetilde{\mathcal O}(n^{-1/4}).
\]
Combining this bound with~\eqref{eq:app-exotic-estimate} gives
\begin{align*}
p^\star-G(\widehat w_n)
&=
p^\star-\widehat G_n
+\widehat G_n-G(\widehat w_n)\\
&=
\widetilde{\mathcal O}
\left(
n^{-1/(d_G+4)}
\right)
+
\widetilde{\mathcal O}(n^{-1/4}).
\end{align*}
Since \(d_G\geq0\),
\[
\frac{1}{d_G+4}\leq\frac14,
\]
so the final re-evaluation error is no slower than the outer-search
rate. Thus,
\[
0
\leq
p^\star-G(\widehat w_n)
=
\widetilde{\mathcal O}
\left(
n^{-1/(d_G+4)}
\right).
\]

\section{The EXOTIC Tree-Search Procedure}
\label{app:exotic-algorithm}

For completeness, Algorithm~\ref{alg:exotic-original} reproduces
the EXOTIC procedure of~\cite{EXOTIC}, with minor notational
changes and corrections of typographical errors. We use \(B\) for
the branching factor, reserving \(K\) for the DRPDM iteration
budget. For \(w\in\mathcal W\), \(z\in\varphi(w)\), and \(H\ge1\),
define
\[
\mathsf{Eval}_H(w,z)
:=
F\bigl(\operatorname{OPT}(w,z,H),w\bigr).
\]
In our implementation, \(\operatorname{OPT}\) is instantiated by
the distributed oracle in Algorithm~\ref{alg:drpdm}.

\begin{algorithm}[t]
\caption{EXOTIC tree search~\cite{EXOTIC}}
\label{alg:exotic-original}
\footnotesize
\begin{algorithmic}[1]
\Require Depth \(h_{\max}\), branching factor \(B\), representatives
\(\{(w_{h,r},z_{h,r})\}_{h,r}\)
\State Initialize \(\mathcal T^{(B)}\) with root
\((w_{0,1},G_{0,1}=0,z_{0,1},\#_{0,1}=0)\)
\For{\(r=1,\ldots,B\)}
    \State \(G_{1,r}\gets
    \mathsf{Eval}_{h_{\max}}(w_{1,r},z_{1,r})\)
    \State \(\#_{1,r}\gets h_{\max}\); append \((1,r)\) to the root
\EndFor
\Statex \textbf{Tree-expansion phase}
\For{\(h=1,\ldots,h_{\max}\)}
    \For{\(m=1,\ldots,\lfloor h_{\max}/h\rfloor\)}
        \State \(H\gets\lfloor h_{\max}/(hm)\rfloor\)
        \State \(
        (h,\bar r)\gets
        \displaystyle\arg\max_{\substack{
        (h,r)\in\mathcal T_{\rm leaf}^{(B)}\\
        \#_{h,r}\ge H}}
        G_{h,r}
        \)
        \State Append the \(B\) children of \((h,\bar r)\)
        \For{\(r=1,\ldots,B\)}
            \State \(G_{h+1,r}\gets
            \mathsf{Eval}_{H}(w_{h+1,r},z_{h+1,r})\)
            \State \(\#_{h+1,r}\gets H\)
        \EndFor
    \EndFor
\EndFor
\Statex \textbf{Re-evaluation phase}
\For{\(p=0,\ldots,\lfloor\log_2h_{\max}\rfloor\)}
    \State \(
    (h_p,r_p)\gets
    \displaystyle\arg\max_{\substack{
    (h,r)\in\mathcal T^{(B)}\\
    \#_{h,r}\ge2^p}}
    G_{h,r}
    \)
    \State \(G_{h_p,r_p}\gets
    \mathsf{Eval}_{\lfloor h_{\max}/2\rfloor}
    (w_{h_p,r_p},z_{h_p,r_p})\)
\EndFor
\State \(
\widehat p\gets
\displaystyle\arg\max_{
0\le p\le\lfloor\log_2h_{\max}\rfloor}
G_{h_p,r_p}
\)
\State \(\widehat w_n\gets
w_{h_{\widehat p},r_{\widehat p}}\)
\State \(\widehat G_n\gets
G_{h_{\widehat p},r_{\widehat p}}\)
\State \textbf{return} \(\widehat G_n,\widehat w_n\)
\end{algorithmic}
\end{algorithm}

\section{Projected Zeroth-Order Ascent}
\label{app:zo-algorithm}

Algorithm~\ref{alg:zeroth-order} gives the complete projected
zeroth-order procedure. Each outer iteration uses two DRPDM
evaluations, and the final tuple is re-evaluated using the larger
budget \(K_{\rm final}\). The exponents \(r_\delta\) and
\(r_\alpha\) control the decay of the perturbation radius and
outer stepsize, respectively; their numerical values are specified
in Section~\ref{sec:numerical-experiments}.

\begin{algorithm}[t]
\caption{Projected Zeroth-Order Ascent with DRPDM Evaluations}
\label{alg:zeroth-order}
\begin{algorithmic}[1]
\Require Outer budget \(T\); inner budgets
\(\{K_t\}_{t=0}^{T-1}\); final budget \(K_{\rm final}\);
\(\delta_0,\alpha_0>0\) with
\(\mathcal W_{\delta_0}\neq\varnothing\);
\(w_0\in\mathcal W_{\delta_0}\);
set
\(\delta_t=\delta_0(t+1)^{-r_\delta}\) and
\(\alpha_t=\alpha_0(t+1)^{-r_\alpha}\).
\For{$t=0,\ldots,T-1$}
  \State Sample $v_t\sim\operatorname{Unif}(\mathbb S^{D_{\mathcal W}-1})$
  \State $w_t^+\gets w_t+\delta_t v_t$
  \State Run Algorithm~\ref{alg:drpdm} on $w_t$ and $w_t^+$,
  each for $K_t$ iterations with multipliers initialized to zero
  \State Obtain $\widehat G_t^0$ and $\widehat G_t^+$
  \State $\widehat g_t\gets
    (D_{\mathcal W}/\delta_t)(\widehat G_t^+-\widehat G_t^0)v_t$
  \State $w_{t+1}\gets P_{\mathcal W_{\delta_{t+1}}}
    (w_t+\alpha_t\widehat g_t)$
\EndFor
\State Run Algorithm~\ref{alg:drpdm} on $w_T$ for
$K_{\mathrm{final}}$ iterations
\State \Return $w_T$, $\widehat\theta_T:=\bar\theta_{w_T}[K_{\mathrm{final}}]$,
and $\widehat G_T:=F(\widehat\theta_T,w_T)$
\end{algorithmic}
\end{algorithm}

\section{Exact Full-Domain Error for Linear--Quadratic Network Games}
\label{app:lq-full-domain-error}

This appendix derives the closed-form expression used to evaluate
the full-domain error in the linear--quadratic network-game
experiments. Recall the utility functions and quadratic potential
parameterization introduced in
Section~\ref{subsec:lq-experiments}. For a unilateral deviation
\(y=(a,a_i')\), the corresponding error is
\begin{equation}
\label{eq:lq-loss-coefficients}
\ell_i(\theta,y)
=
(a_i'-a_i)
\left[
A_i(\theta)(a_i'+a_i)
+B_i(\theta)
+\sum_{j\ne i}C_{ij}(\theta)a_j
\right],
\end{equation}
where 
$A_i(\theta)
:=
p_i+\frac12,$ $B_i(\theta)
:=
q_i-\zeta_i,$ and for all $j\ne i,$ $C_{ij}(\theta)
:=
r_{\min\{i,j\},\max\{i,j\}}
-\gamma V_{ij}.$
Define
\begin{equation}
\label{eq:lq-K}
K_i(\theta)
:=
|B_i(\theta)|
+
\sum_{j\ne i}|C_{ij}(\theta)|.
\end{equation}
Maximizing~\eqref{eq:lq-loss-coefficients} over the full action
domain gives
\begin{equation}
\label{eq:lq-exact-full-domain}
\mathcal E(\theta)
=
\max_{i\in[N]}H_i(\theta),
\end{equation}
where
\begin{equation}
\label{eq:lq-Hi}
H_i(\theta)
=
\begin{cases}
2K_i(\theta),
&
K_i(\theta)\geq2|A_i(\theta)|,\\[1mm]
|A_i(\theta)|+K_i(\theta)
+\dfrac{K_i(\theta)^2}{4|A_i(\theta)|},
&
0\leq K_i(\theta)<2|A_i(\theta)|.
\end{cases}
\end{equation}
When \(A_i(\theta)=0\), the first case applies. Hence, all reported
full-domain errors are evaluated exactly, without discretizing or
sampling the action space.

\end{document}